\documentclass[10pt, reqno]{amsart} 


 	\usepackage{setspace}
	\usepackage{indentfirst} 
	\usepackage{amsmath}
	\usepackage{amssymb}
	\usepackage{amsfonts}
	\usepackage{verbatim}
	\usepackage[mathscr]{eucal}
	\usepackage{amsthm}
	\usepackage[numbers]{natbib} 
	
	

\theoremstyle{plain} 
\newtheorem{theorem}{Theorem}[section]
\newtheorem{lemma}{Lemma}[section]

\theoremstyle{remark}
\newtheorem{remark}{Remark}[section]


	\renewcommand{\MR}[1]{} 

	\numberwithin{equation}{section} 

	\parskip = 0.1 in 



	\newcommand{\del}[1]{\partial_{#1}} 
		
	\newcommand{\scC}{\mathscr{C}}
	\newcommand{\scD}{\mathscr{D}}
	\newcommand{\scE}{\mathscr{E}}
	\newcommand{\scH}{\mathscr{H}}
	\newcommand{\scJ}{\mathscr{J}}
	\newcommand{\scL}{\mathscr{L}}
	\newcommand{\scS}{\mathscr{S}}
	\newcommand{\scZ}{\mathcal{Z}}

	\newcommand{\bbR}{\mathbb{R}}
	
	\newcommand{\BLambda}{\mathbf{\Lambda}}

	\newcommand{\calP}{\mathcal{P}}

	\newcommand{\opA}{\mathcal{A}}
	\newcommand{\opB}{\mathcal{B}}
	\newcommand{\opb}{{b}}

	\newcommand{\norm}[2]{\| #2 \|_{#1}} 
	\newcommand{\Norm}[2]{\|\hspace{-.095em} |#2|\hspace{-.095em} \|_{#1}}	
	
	\newcommand{\AAnorm}[2]{\langle\!\langle  #2 \rangle\!\rangle_{#1}} 
	\newcommand{\AAAnorm}[2]{\langle\!\langle\!\langle  #2 \rangle\!\rangle\!\rangle_{#1}} 
	

	\newcommand{\sqgg}{\sqrt{\gamma}\,}
	\newcommand{\sqw}{\sqrt{w}\,}
	\newcommand{\sqg}{\sqrt{|g|}\,}
	
	\newcommand{\dvol}[1]{\mu_{#1}}

\begin{document}

\title[Membranes in Light cone gauge]{Local well-posedness for membranes in the light cone gauge} 
\date{\today}

\author[P.~T.~Allen]{Paul~T. Allen}
\address[Paul~T. Allen]{Albert Einstein Institute\\
 Am M\"uhlenberg 1\\
 14476 Potsdam\\ 
 Germany
\and
Interdisciplinary Arts and Sciences\\
University of Washington - Tacoma\\
1900 Commerce Street\\
Tacoma, WA 98402, USA
}
\email{paul.allen@aei.mpg.de}

\author[L.~Andersson]{Lars Andersson} 
\address[Lars Andersson]{Albert Einstein Institute\\
Am M\"uhlenberg 1\\
14476 Potsdam\\
  Germany 
}

 \email{laan@aei.mpg.de}

\author[A.~Restuccia]{Alvaro Restuccia} 
\address[Alvaro Restuccia]{Department of Physics\\
Simon Bolivar University\\
Caracas, Venezuela}
\email{arestu@usb.ve}

	
\begin{abstract} 
In this paper we consider
the classical initial value problem for the bosonic membrane in
light cone gauge.  
A Hamiltonian reduction gives a system with one constraint,
the area preserving constraint. 
The Hamiltonian evolution equations corresponding to this system, however,  
fail to be hyperbolic. 
Making use of the area preserving constraint, an equivalent system of evolution
equations is found, which is hyperbolic and has a well-posed initial value
problem.   
We are thus able to solve the initial value problem for the Hamiltonian evolution equations by means of this equivalent system.
We furthermore obtain a blowup criterion for the membrane evolution equations, and
show, making use of the constraint, that one may achieve improved regularity
estimates. 
\end{abstract}


\maketitle


\section{Introduction}
The initial value problem for the classical evolution of a physical system seeks to characterize critical points of the action functional associated to the problem in terms of an appropriate set of initial data.  Once the well-posedness of the classical field equations has been shown, the initial data not only determines the classical motion for some time interval $(0,T)$, where $T$ is the time of existence corresponding to the data, but also the Hilbert space of wave functions of the corresponding quantum-mechanical system.  
For physical systems without gauge symmetries described by conjugate pairs $(x,p)$ satisfying Hamilton's equations, the initial data is given directly by specifying the conjugate pair $(x_0, p_0)$ at an initial time.  The wave functions, in the Schr\"odinger picture, are precisely the space of functions $\varphi(x_0)$ with $\varphi \in L^2$.

In the presence of gauge symmetries the initial data is restricted by constraints and gauge-fixing conditions.  One may solve these restrictions at the classical level in terms of conjugate pairs and then determine the wave function as before, or one may consider general wave functions $\varphi(x_0)$, $\varphi \in L^2$, and restrict them at the quantum level to a subspace $\scH\subset L^2$; the domain of the quantum operators should then be dense in $\scH$.  In this latter case, one may then extend the phase space in order to realize
the BRST symmetry of the quantum system.
A key point in both procedures is to determine both the classical and quantum restrictions associated to the gauge symmetries.

The formulation of the initial value problem for gauge theories, including Einstein gravity, electromagnetism, Yang-Mills, string, and membrane theories, as well as their supersymmetric extensions, is formally solved by the approach of Dirac \cite{Di64}.\footnote{For the formal computations to be properly defined, the well-posedness of the Hamiltonian equations must be established.}
It determines in a constructive and systematic manner, the constraints associated to the gauge theory.  
Moreover the procedure ensures that the constraints are preserved in time by the Hamiltonian flow, with the Lagrange multipliers associated second-class constraints determined by the conservation procedure and Lagrange multipliers associated to first-class constraints remaining as gauge-dependent variables.
The resulting Hamiltonian formulation of the gauge theory ensures that if the constraints are satisfied initially, then they are also satisfied on $(0,T)$, for some time of existence $T$.  The first-class constraints close as an algebra under a Poisson bracket constructed from the symplectic structure of the Hamiltonian formulation.  The generators of this algebra realize the gauge symmetry of the action functional, and consequently of the field equations themselves.
(For a detailed exposition of this systematic approach in the case the Einstein gravity, including a discussion of first- and second-class constraints, see \cite{IsNe80}.)

A well-known gauge choice of the above mentioned field equations is the light cone gauge (LCG).  In this gauge, one can solve the constraints, together with the gauge fixing conditions, in terms of unconstrained physical degrees of freedom.  This property of the LCG becomes very useful when proving relevant properties of the corresponding quantum theories, one example being arguments concerning the unitarity of $S$-matrix in superstring theory.  
A treatment of Einstein gravitation has been considered in 
\cite{SchSch75}, \cite{Ka75}, \cite{ArRes76}.  In \cite{ArRes76} the positivity of the energy and coupling to gauge fields was analyzed.  For a review of non-covariant gauges in gauge field theories, including the light cone gauge, see \cite{Lieb87}.

The LCG has proven particularly fruitful in classical and quantum analyses of the $D=11$ supermembrane theory \cite{BeSeTo87},\cite{Ho82},\cite{DeHoNi88}, which is also a a relevant ingredient of $M$-theory.  
The supermembrane theory has first-class constraints associated to the generators of diffeomorphisms of the world volume and the local fermionic symmetry, known as $\kappa$-symmetry, as well as second-class fermionic constraints.  
The complete set of constraints is very difficult to treat in a covariant formulation. However, the LCG allows an explicit solution of all first- and second-class constraints.  It is also convenient in the case of both membrane and supermembrane theory to fix all symmetries up to area-preserving diffeomorphisms, which in this case are in fact symplectomorphisms (with respect to a symplectic structure defined as part of the gauge choice).\footnote{In even dimensions different from two, these symplectomorphisms are volume-preserving, but not all volume-preserving diffeomorphisms are symplectomorphisms.}
The resulting Hamiltonian, in LCG with these residual symmetries, may be analyzed without difficulty.  The constraint associated to the area-preserving diffeomorphisms may be interpreted as a symplectic generalization of the Gauss law, with nonlinear terms \`a la Yang-Mills, but arising from the symplectic bracket rather than the bracket of the $SU(n)$ Lie algebra \cite{MaOvRe01}.

Once the LCG is implemented in a Hamiltonian formulation of the supermembrane, and the first- and second-class constraints are solved, one obtains directly a canonical Hamiltonian reduction of the original formulation.  That is, the elimination of phase-space variables occurs in canonical-conjugate pairs.  The Hamiltonian in the LCG is polynomial in the remaining variables and their derivatives and, because it is formulated in terms of physical degrees of freedom (the residual gauge symmetry may be fixed in a convenient manner), it is essentially the same Hamiltonian which appears in the path-integral formulation of the quantum theory.  This means that properties of the same potential will determine both classical and quantum aspects of the theory.  The classical and quantum stability properties of the membrane or supermembrane theories are determined by the nonlinear dependence of the potential, along the configurations for which the potential becomes zero.  

These properties have been analyzed by considering a $SU(n)$ regularization of the membrane or supermembrane theory \cite{Ho82}, \cite{DeLuNi89},\cite{DeMaNi90}.  
This regularization is itself an interesting physical model, and was the starting point in the introduction of the matrix model.  
The $SU(n)$-regularized supermembrane is a maximally supersymmetric Yang-Mills theory in $1+0$ dimensions \cite{ClHal85},
 the classical field equations being ordinary differential equations and the
 corresponding quantum problem finite dimensional.  The regularized membrane
 potential has ``valleys'' extending to infinity; the value of the potential
 at the bottom of these valleys is zero.  Thus the static solutions of the
 equations of motion with zero value of the potential are unstable.
 Nonetheless, the quantum-mechanical Hamiltonian of the membrane theory has a
 discrete spectrum, due to the structure of the valleys.  In dimensions
 greater than one, the discreteness of the spectrum of the regularized
 Hamiltonian, a Schr\"odinger operator, is determined by the behavior of the
 mean value of the potential in the sense of Mol{\v{c}}anov \cite{Mo53}, \cite{MaSh05}.  This mean value tends to infinity as one moves outward along the valleys in configuration space, thus ensuring that the operator has discrete spectrum \cite{Si83},\cite{Lu83}, \cite{GaNPR07}.  In the supermembrane case, the potential becomes unbounded from below, due to Fermionic contributions, in a manner which renders the spectrum continuous \cite{DeLuNi89}.  However, the supermembrane with central charges generated by the wrapping of the supermembrane on a compact sector of the target space has discrete spectrum \cite{BoGaRe03}.  In fact the topological condition ensuring the nontrivial wrapping, which does not modify the number of local degrees of freedom, eliminates the nontrivial configurations with zero potential.

Besides this interesting relation between classical and quantum stability properties of the membrane and supermembrane theories, there are other aspects of classical supermembrane theory which reproduce quantum $\alpha'$ effects of string theory.  
(That is, perturbative quantum effects in string theory where the perturbative parameter is the inverse of the string tension.)
The closure of the $\kappa$ symmetry in supermembrane theories with a general background metric on the target space is only possible provided the background satisfies the $D=11$ supergravity equations \cite{BeSeTo88}.  The analogous result for superstring theory arises only when $\alpha'$ quantum effects are taken into account.  Furthermore, the IIA $D$-brane action in $D=10$ may be obtained from a duality transformation from the $D=11$ supermembrane compactified on a circle.  The same $D$-brane action arises from Dirichlet strings only when quantum effects are considered \cite{Sc96},\cite{To96}.

In light of these considerations, it is natural to analyze the classical initial-value problem of the membrane and supermembrane in the LCG.  This not only provides a foundation for the study of the quantum
mechanical systems corresponding to the membrane and supermembrane in LCG, by
putting the classical theory on a firm basis but, in establishing a criterion for continuing the classical solution in time, is an important first step towards the identification and characterization of whatever singularities may develop.  This is interesting not only from a classical perspective, but also from a quantum mechanical one, as singularities are expected to play an important role in the quantum theory \cite{AtWi03}, \cite{AcGu04}.  It may also provide a framework to analyze the large $n$ limit of the regularized theories and the related stability problems, one may hope to extrapolate consequences for the quantum stability problem.

From a mathematical perspective, the problem is interesting as the membrane equations, which correspond to the supermembrane field equations after the spinor dependence (i.e., the fermionic sector) has been anhilated,
are one case of an important class of geometric wave equations.  Geometrically, membranes are timelike submanifolds  with vanishing mean curvature; the equation governing this condition is the Lorentzian analogue of the minimal submanfold equations (much as wave maps are the Lorentzian analogue of harmonic maps).
Attempts to approach the problem of existence of such submanifolds by applying techniques from the theory of  differential equations are complicated by the inherent diffeomorphism invariance of the problem; in an arbitrary coordinate system the equations are not strictly hyperbolic (i.e., wave equations).  
As in the case of the mathematical study of the Einstein field equations, this difficulty can be overcome by choosing a gauge, which eliminates (or at least reduces) the diffeomorphism freedom and yields a system of equations to which PDE theory can be applied.

In fact, there are actually two levels at which the mathematical problem of 
local existence can be posed.  As the equations governing the embedding are wave-type equations, one expects to pose an initial value problem.  The first is at the level of geometry: Given an initial spacelike submanifold and timelike vectorfield on the submanifold, can one extend the submanifold in the direction of the vectorfield such that the mean curvature of the extension vanishes?  The second is at the level of of embedding functions: Given an embedding function for an initial spacelike slice, an initial `velocity' for that function, and a gauge condition, can one find an embedding function satisfying the relevant PDE (as expressed under the gauge condition)?  In either case, one would also like to show Cauchy stability as well: that not only do local solutions exist, but that they are unique and depend continuously upon the given data.

The problem of local well-posedness at the level of geometry has been recently addressed in a very general setting in \cite{Mi08}.  At the level of the PDE, that work makes use of a variation of the harmonic gauge condition, which was first applied to the membrane problem in a Minkowski ambient spacetime by \cite{AuCh78}, and has been used extensively in the study of the initial value problem for the Einstein equations (\cite{CB52}, see also \cite{FiMar72}).  
These works provide a very satisfactory resolution to the geometric local-existence problem, as well as a rather complete solution to the local existence problem for the PDE in harmonic coordinates.  It leaves open, however, the problem of local existence of embedding functions satisfying other gauge conditions which may be better-suited for addressing questions of the lifespan of solutions (and/or singularity formation of solutions) or questions arising when considering aspects of the  quantum problem.  

In fact, relatively little is known concerning the existence of solutions to the PDE when reduced by even the simplest gauge conditions.
A Hamiltonian reduction under the partial gauge  condition that the time coordinate of the submanifold coincide with the Minkowski time coordinate was considered in \cite{Mo06} (see also \cite{Ho94}).  Under this choice of foliation, the equations for codimension-$1$ membranes reduce to a first-order system.  A large number of examples of solutions satisfying this gauge condition have been constructed; see for example \cite{HoNi93}.

In this work we address the well-posedness of the membrane field equations in
the LCG.  In particular, we show that for any non-degenerate initial data
satisfying the constraints there exists a time interval $(0,T)$, with $T$
depending on the initial data, and a unique solution to equations of motion
of the reduced (in LCG) Hamiltonian corresponding to the initial data.
Furthermore, both $T$ and the solution depend continuously on the initial
data in a suitable topology.
The result is obtained by application of the theory  of hyperbolic partial
differential equations.  

The system of field equations obtained by taking variations of the membrane action in light cone gauge is a fully non-linear system which is not, however, hyperbolic.  We therefore introduce a modified system which is quasi-linear and hyperbolic, and has the property that solutions of this modified system satisfy the constraints and field equations if these are satisfied initially.  It would be interesting if this modified system has a (super-) membrane action associated to it, as it has additional constraints which are preserved by evolution and  which should appear in any quantum formulation of such a theory.  The existence of such preserved constraints is an indication that there may be a gauge theory, with gauge symmetries generated by both constraints, which under some appropriate gauge fixing reduces to the modified system we employ.

The modified system is obtained by differentiating the LCG field equations
with respect to time, and eliminating terms which vanish due to the
constraints.  Similar techniques have been used to extract hyperbolic systems
for the Einstein and
Yang-Mills equations without gauge fixing, see \cite{Abrahams:1996hh} and
references therein. 
We are able to show the existence of solutions to the modified
equation using a standard argument based on energy estimates.  The structure
of the modified system is such that the normal procedure for obtaining energy
estimates, commuting spatial derivative operators through the equation, leads
to a loss (see commutator estimate \eqref{HighSpatialCommute}).  We recover
this loss by means of an elliptic estimate and by commuting a time derivative
through the equation.  The result of our method is a local existence result
which requires slightly more regularity than expected for arguments based on
the Sobolev embedding.  We are nevertheless able, by making use of the
constraint, to obtain an improved energy estimate for {solutions} the
original equation.  With this improved energy estimate we are able to show
that the time of existence depends on the ``classically-expected'' norm of
the initial data.

In 
this paper we work in terms of integer order Sobolev spaces only. By making use
of more sophisticated techniques, the results presented here can be improved
as far as the regularity requirements are concerned. The algebraic 
structure of the
reduced field equation, makes it interesting to ask for the optimal
well-posedness result from the point of view of the regularity of initial
data. The matrix analog of the system gives an ODE
analog of the membrane system, which has been extensively studied (see for example \cite{BorHopSch91},\cite{HopYau98},\cite{ArnEtc09}). It is
interesting to consider this from an analytical point of view as a consistent
truncation of the system, and to make use of related ideas to study the
local well-posedness of the system for rough initial data. 
   
The organization of the remainder of this paper is as follows.  In the next section we outline our notational conventions, and list a number of functions spaces and related estimates appearing in the local existence proof. 
In subsequent section \ref{S:CanonicalAnalysis}, we introduce the Lagrangian
formulation of the membrane problem and perform a Dirac-style canonical
analysis of the membrane problem, deriving the reduced equations of motion
under the light cone gauge condition.  In section \ref{S:InitialValueProblem}
we give a treatment of the initial value problem by means of the modified
system described above. The modified system is hyperbolic and well-posedness
follows along essentially standard lines, with the additional difficulty that
the system is fully nonlinear. See \cite{Ko93}, \cite{Le91} for treatments of
related problems. For completeness, we
give a self-contained proof in section \ref{S:ProofOfExistence}.  
Finally in section \ref{S:Improvement} we derive the improved energy estimate, which gives us the improved estimate for the time of existence.

\section{Preliminaries}
We consider $3$-dimensional submanifolds $M$ of $D$-dimensional Minkowski space $\bbR^{D}$ with $M \cong \bbR \times \Sigma$ and $\Sigma$ some compact $2$-manifold.

We make use of a number of index sets: Greek indices $\mu,\nu, \dots$,
ranging over $0, \dots, D-1$, refer to Cartesian coordinates in Minkowski
space.  Middle Latin indices $m,n,\dots$, ranging $1, \dots, D-2$, refer to
Cartesian coordinates on Euclidean space $\bbR^{D-2}$.  Lower-case early Latin indices
$a,b,\dots,$ take values $1,2$ and refer to coordinates on $\Sigma$.
Upper-case Latin indices $A,B,\dots$,
take values $0,1,2$ and refer to coordinates on $M$.  We now describe the various coordinates in more detail and indicate which metrics are used to raise/lower each set of indices.  In all cases we sum over repeated indices.

In Cartesian coordinates $(x^\mu)$ the Minkowski metric $\eta_{\mu\nu}$ is given by 
\begin{equation}
\eta_{\mu\nu}dx^\mu dx^\nu = -(dx^0)^2 + (dx^1)^2 + \dots +(dx^{D-1})^2. 
\end{equation}
Greek are raised/lowered using $\eta_{\mu\nu}$ so that $x_\mu = \eta_{\mu\nu}x^\nu$.

We also make use of null coordinates $x^{\pm} =  \tfrac{1}{\sqrt{2}\,} ( x^0 \pm x^{D-1} )$ for Minkowski space, denoting by $(x^m)$, the remaining coordinates $(x^1, \dots, x^{D-2})$ in Euclidean space $\bbR^{D-2}$.   
With respect to these coordinates the Minkowski metric is given by $\eta_{++} = \eta_{--}=0$, $\eta_{+-} = \eta_{-+} =-1$, and $\eta_{mn} = \delta_{mn}$ is the Euclidean metric.

Let $\tau$ be some global coordinate whose level sets $\Sigma_\tau$ foliate $M$.  
When there is no confusion, we denote $\Sigma_\tau$ by $\Sigma$.
On each $\Sigma$ we use ($\tau$-independent) local coordinates $(\sigma^a)$ with $a=1,2$.  
Together $(\tau,\sigma^a) = (\xi^{A})$ give coordinates on $M$.  

Any (embedded) submanifold $M\subset \bbR^D$ is determined by the function $x = (x^\mu):M \to \bbR^D$ which induces a metric $g_{AB}$ on $M$ which in local coordinates is given by $g_{AB} = \eta_{\mu\nu}\del{A}x^\mu \del{B}x^\nu$; here $\del{A}x^\mu = \frac{\del{}x^\mu}{\del{}\xi^A}$.
In what follows we restrict attention to the case where the induced metric $g_{AB}$ is Lorentzian with $\del\tau$ a timelike direction.
The metric $g_{AB}$ induces a volume form $\dvol{g}$, given in coordinates by $\dvol{g} = \sqg d\xi^0 \wedge d\xi^1 \wedge d\xi^2$ with
$\sqg = \sqrt{-\det{[g_{AB}]}\, }$, used below to define the Lagrangian action for the membrane system.  

We denote by $\gamma_{ab}$ the metric on $\Sigma$ induced by $g_{AB}$; we require $\gamma_{ab}$ to be Riemannian.  Note that 
\begin{equation}
 \gamma_{ab} =\eta_{\mu\nu}\del{a}x^\mu \del{b}x^\nu = -(\del{a}x^+\del{b}x^- + \del{a}x^-\del{b}x^+) + \del{a}x^m\del{b}x_m.
\end{equation}
Where we desire to indicate explicitly the $x$-dependence of $\gamma_{ab}$ we write $\gamma(x)_{ab}$.
The metric $\gamma_{ab}$ gives rise to a volume form $\dvol{\gamma} = \sqrt{\gamma}\,d\sigma^1\wedge d\sigma^2$; here $\gamma$ is the determinant of $\gamma_{ab}$.
By the usual formula for the inverse of a matrix, the inverse $\gamma^{ab}$
of $\gamma_{ab}$ can be expressed 
\begin{equation}
 \gamma^{ab} = \frac{1}{\gamma}\in^{ac}\in^{bd}\gamma_{cd}.
\end{equation}
Here $\in^{ab}$ is the anti-symmetric symbol with two indices.  
(Explicitly, $\in^{12} = -\in^{21}=1$, $\in^{11} = \in^{22} =0$.)
Using the anti-symmetric symbol, the determinant $\gamma$ can be 
written as 
\begin{equation}\label{DetGamma}
\gamma = \frac{1}{2} \in^{ab}\in^{cd} \gamma_{ac} \gamma_{bd}. 
\end{equation}

\subsection{Symplectic structure}
The light cone gauge condition requires the choice of a fixed ($\tau$-independent) area form $\sqw d\sigma^1\wedge d\sigma^2$ on $\Sigma$.  
We presume $\sqw d\sigma^1\wedge d\sigma^2$ to be the area element arising from some fixed
background metric $w_{ab}$ on $\Sigma$; thus $\sqw = \sqrt{\det{w_{ab}}}$.  We may without loss of generality assume that $w_{ab}$ is real analytic.

The area form gives rise to a symplectic structure on $\Sigma$.  In local coordinates, the Poisson bracket associated to this symplectic structure is given by
	\begin{equation}
	\{ f, g\} = \frac{\in^{ab}}{\sqw} \del{a}f\, \del{b}g
	\end{equation}
where $f,g$ are functions on $\Sigma$.
Recall that the Poisson bracket is bilinear, skew ($\{ f, g\} = - \{ g, f\}$), and satisfies the Jacobi identity
	\begin{equation}\label{Jacobi}
	 0 = \{ f, \{ g, h\} \} + \{ g, \{ h , f\} \} + \{ h, \{ f , g \} \}.
	\end{equation}
In what follows we make use of the area form $\sqw d\sigma^1\wedge d\sigma^2$ when integrating on $\Sigma$, but denote the form simply by $\sqw$ (i.e. the expression $d\sigma^1\wedge d\sigma^2$ is to be implicitly understood).
Note that Stokes' theorem implies
\begin{equation}
\int_\Sigma \{f,g\}h\sqw = -\int_\Sigma f \{h,g\}\sqw.
\end{equation}

 \subsection{Derivatives and function spaces}
Here we outline our notational conventions regarding derivatives and introduce some function spaces used in the local existence results below.  
These spaces are defined with respect to a fixed atlas of coordinate charts on $\Sigma$.  

For function $v$ defined on $[0,T]\times\Sigma$, we denote by $Dv$ any coordinate derivative $\del{a}v$.  By $D^lv$, with $l$ a non-negative integer, we mean an arbitrary combination of $l$ coordinate derivatives of $v$.  We furthermore denote by $\del{}v$ any of $\del{a}v$ or $\del\tau v$.  In a slight abuse of notation, in the presence of norms we implicitly sum over all coordinate derivatives, so that 
\begin{equation}
 |\del{}v|^2 = |\del\tau v|^2 + |Dv|^2;
\end{equation}
$|D^lv|$, $|D\del{}v|$, etc. are defined analogously.

The following is an overview of the norms and function spaces used.
 \begin{description}
\item[Sobolev spaces]  Denote by $H^l$ the Sobolev space of functions on $\Sigma$ whose derivatives (in a fixed coordinate atlas) of up to order $l$ are square-integrable.  Thus
\begin{equation}
\norm{L^2}{v}^2 = \int_\Sigma |v|^2\sqw,\quad \norm{H^1}{v}^2 =  \norm{L^2}{v}^2 + \int_\Sigma |Dv|^2\sqw, \quad \text{ etc.}
\end{equation}

\item[$L^\infty$ spaces]  The following norms control the (essential) supremum of functions:
\newcommand{\esssup}{\mathop{\mathrm{ess\,sup}}}
\begin{equation}
\norm{L^\infty}{v} = \esssup_\Sigma |v|.
\end{equation}
When $v\in C^0(\Sigma)$ one can replace $\esssup_\Sigma$ with $\max_\Sigma$.  Furthermore define
\begin{equation}
\norm{W^{1,\infty}}{v} = \norm{L^\infty}{v} + \norm{L^\infty}{Dv}.
\end{equation}

\item[Curves in function spaces]  The set of maps $w:[0,T]\to H^l$ which are $r$ times differentiable with respect to $\tau$ is denoted $C^r([0,T];H^l)$ and given the norm 
$$\norm{C^r([0,T];H^l)}{w}=\sup_{[0,T]}\sum_{i=0}^r\norm{H^l}{\del\tau^iw}.$$

For functions $w\in C^r([0,T];H^l) $, we use a subscript to denote restriction to $\tau =0$.  It is important that this restriction is made \emph{after} any differentiation, so that (for example) $|\del{} w_0|^2 = |\del\tau w(0)|^2 + |Dw(0)|^2$.

\item[Spacetime norms]
We make use of two particular norms for curves of $H^l$ functions.
For
\begin{equation}
v\in C^l_T := \bigcap_{i=1}^2 C^i([0,T];H^{l-i})
\end{equation}
we make use of the norms
\begin{equation}
\norm{l}{v}^2 = \sum_{i=0}^2\norm{H^{l-i}}{\del\tau^iv}^2
\end{equation}
and
\begin{equation}
 \Norm{l,\tau}{v} = \sup_{[0,\tau]}\norm{l}{v(\cdot)}.
\end{equation}
Note that $\norm{l}{v}$ is equivalent to $\norm{H^l}{v} + \norm{H^{l-2}}{\del{}\del\tau v}$.
We also use the norms
\begin{equation}
 \AAnorm{l}{x}^2 = \norm{H^l}{x}^2 + \norm{H^l}{\del\tau{x}}^2
\end{equation}
and
\begin{equation}
\AAAnorm{l,\tau}{x} =  \sup_{[0,\tau]}\AAnorm{l}{x}
\end{equation}
for $x\in C^1([0,T];H^l)$.
\end{description}

\subsection{Results from analysis}
We make use of the following estimates, which can be proven using classical methods of calculus; see, for example, Chapter 13 \S3 of \cite{Tay97}.  
Note that while versions of these estimates hold in all dimensions, as presented here the estimates are dependent on the dimension of $\Sigma$ being $2$.
Here, and in the application of these estimates below, $C$ is a constant independent of the function(s) being estimated (and unless otherwise specified depends only on $\Sigma$, our coordinate charts, $\sqw$, and the number of derivatives being estimated).
\begin{description}
\item[Sobolev inequality] For $l>1$ and $v\in H^l$ we have $v\in C^0$ and
\begin{equation}\label{Sobolev}
\norm{L^\infty}{v} \leq C \norm{H^l}{v}.
\end{equation}

\item[Product estimate]  For $v,w\in L^\infty\cap H^l$ we have
\begin{equation}\label{BasicSobolevProduct}
\norm{H^l}{vw} \leq C\left( \norm{L^\infty}{v}\norm{H^l}{w} + \norm{L^\infty}{w}\norm{H^l}{v}\right) 
\end{equation}
Note that this estimate, together with \eqref{Sobolev} implies that $\norm{l}{xy} \leq C \norm{l}{x}\norm{l}{y}$ provided $l>1$.

\item[Elliptic regularity]  For uniformly elliptic operator $L=\del{a}[a^{ab}\del{b}(\cdot)] + b^a\del{a}$ with $a\in W^{1,\infty}$, $b^a\in L^\infty$, there exists a constant $C$, depending on the norm of the coefficients and the ellipticity constant, such that
\begin{equation}\label{EllipticRegularity}
\norm{H^2}{v} \leq C\left(\norm{L^2}{v} + \norm{L^2}{ Lv} \right)
\end{equation}

\item[Gagliardo-Nirenberg-Moser estimate]
For $l <k$ we have
\begin{equation}\label{GN}
\norm{L^{2k/l}}{D^lv} \leq C \norm{L^\infty}{v}^{1-\frac{l}{k}}\norm{L^2}{D^kv}^{\frac{l}{k}}
\end{equation}
An easy consequence of H\"older's inequality and \eqref{GN} is the following product estimate for $u,v\in H^2$
\begin{equation}\label{LowProductEstimate}
\norm{L^2}{(Du)(Dv)} \leq C\norm{H^2}{u}\norm{H^2}{v}.
\end{equation}

\item[Commutator estimate] 
The estimate \eqref{GN} also implies the following commutator estimate
\begin{equation}\label{CommutatorEstimate}
\norm{H^l}{[D^j,v]w} \leq C\left(\norm{L^\infty}{w}\norm{H^{l+j}}{v} + \norm{L^\infty}{Dv}\norm{H^{l+j-1}}{w} \right).
\end{equation}

\end{description}

\section{Canonical analysis and the light cone gauge}\label{S:CanonicalAnalysis}
We now perform an ADM-style canonical analysis of the membrane system.   Beginning with a Lagrangian action integral, we give a brief treatment for the membrane system in general before analyzing in detail the reduction under the light cone gauge condition.

\subsection{Lagrangian formulation}
We seek an embedding 
	\begin{equation}
	x = (x^\mu):  M^{1+2} \to \bbR^{D}, 
	\end{equation}
 which is critical with respect to the action given by the induced volume element
 	\begin{equation}\label{Action}
	 \scS = - \int _M \dvol{g} .
	\end{equation}

In string theory \eqref{Action} corresponds to the Nambu-Goto action.  An equivalent  formulation, expressed in terms of a polynomial Lagrangian density and corresponding to the Polyakov action of string theory, is given by
	\begin{equation}\label{FAction}
	\scS_P = -\frac12\int _M \left[ \eta_{\mu\nu} g^{AB} \del{A} x^\mu \del{B} x^\nu -1 \right]\dvol{g},
	\end{equation}
where the inverse metric $g^{AB}$ is treated as an independent variable.  Both \eqref{Action} and \eqref{FAction} lead to the same Hamiltionian formulation.

Critical points of the functional \eqref{Action} give rise to submanifolds $M\subset \bbR^D$ with vanishing mean curvature; the Euler-Lagrange equations for \eqref{Action} are
	\begin{equation}\label{E-L}
	\sqg \Box_g x^\mu = \del{A} \left[ \sqg g^{AB} \del{B} x^\mu \right] =0, \quad \mu = 0, \dots D-1,
	\end{equation}
which can be seen by computing 
	\begin{equation}
	 \delta \sqg = \frac12 \sqg g^{AB} \, \delta g_{AB} = \sqg g^{AB}  \del{A} x^\mu \del{B} [\delta x_\mu].
	\end{equation}
When $\sqg \neq 0$, the system \eqref{E-L} can be expressed (in any coordinates) as
	\begin{equation}
	\left( \delta_{\mu\nu}- g^{CD}\del{C}x_\mu \del{D}x_\nu \right)g^{AB}\del{A}\del{B} x^\nu =0.
	\end{equation}

One can interpret \eqref{E-L} as evolution equations for $x(\tau): \Sigma \to \bbR^D$.  As the induced metric $g_{AB}$ is Lorentzian, one expects the system to be hyperbolic (i.e., a wave-type equation) and thus to be able to pose the following initial value problem: For initial embedding $x_0$, and initial velocity $u_0$ does there exists an interval $[0,T]$ on which $x(\tau)$ is defined and satisfies $x(0)=x_0$, $\del\tau x(0)=u_0$, and \eqref{E-L}?

As expressed above, the evolution equations for $x^\mu$ are not strictly hyperbolic; this is due to the diffeomorphism invariance of \eqref{Action} under re-parameterizations of $M$.  (Note that it is also invariant under choice of coordinates in the Minkowski spacetime, but we have fixed these degrees of freedom.)  Thus we turn to the issue of gauge choice by performing a canonical analysis.

\subsection{Canonical (Hamiltonian) analysis}
The Hamiltonian approach plays a fundamental role in the classical and quantum analysis of field theories.  The formulation for gauge theories briefly described below was developed by Dirac in \cite{Di64}.  For a detailed description of this approach applied to classical covariant field theories see \cite{IsNe80}; for a description of path-integral quantum analysis in the presence of general constraints see \cite{Sen76}.

The starting point, as proposed by Dirac \cite{Di64}, is an action integral expressed in terms of a Lagrangian density $\scL$, defined on some time-foliated manifold $\bbR\times\Sigma$.  The density $\scL$ depends on some number of independent fields $\phi$ and their spatial derivatives up to order $k$, $D\phi, \dots, D^k\phi$, as well as on the derivatives of the first time derivative of $\phi$, $\del{t}\phi, D\del{t}\phi, \dots, D^l\del{t}\phi$.

One then introduces the Hamiltonian density $\scH$ via a Legendre transformation
	\begin{equation}
	\int_\Sigma\scH = \int_\Sigma \left( \pi \del{t}\phi - \scL\right)
	\end{equation}
where the conjugate momenta $\pi$ associated to $\phi$, defined as
	\begin{equation}
	\pi = \frac{\delta\scL}{\delta(\del{t}\phi)},
	\end{equation}
appear in the functional derivative of the Lagrangian action $L = \int_{\bbR\times\Sigma}\scL$ with respect to independent variations $\delta\phi$ and $\delta(\del{t}\phi)$:
	\begin{equation}
	\delta L = \int_{\bbR\times\Sigma} \left(  \frac{\delta\scL}{\delta \phi}\delta\phi + \frac{\delta\scL}{\delta(\del{t}\phi)}\delta(\del{t}\phi)\right).
	\end{equation}

The canonical variables $\phi$, $\pi$ take values in an infinite-dimensional manifold referred to as the phase space of the theory, which is equipped with a Poisson structure given (as a density) by	
	\begin{equation}
	[ u, v]_\calP =  \frac{\delta u}{\delta \phi}\frac{\delta v}{\del{}\pi}
		- \frac{\delta v}{\delta \phi}\frac{\delta u}{\delta \pi}.
	\end{equation}
The symplectic structure determined by the Poisson bracket plays a fundamental role in both the classical analysis of the field theory, as well as in the canonical quantization of the theory.  It is also the main algebraic structure in the deformation quantization approach \cite{BaFFL78}, \cite{Fe94}, \cite{Kon03}.
	
In gauge theories, $\del{t}\phi$ cannot be expressed in terms of unconstrained momenta as the Hessian of $\scL$ with respect to $\del{t}\phi$ becomes singular; thus there are constraints on the phase space for $(\phi, \pi)$.  Further constraints, or restrictions on the Lagrange multipliers associated to these constraints, may arise upon imposing the requirement that the vanishing of the constraints be preserved by evolution under the Hamiltonian flow (see below).  
These constraints are of two types.  First-class constraints are those which commute, on the constraint submanifold (defined to be the submanifold where all first and second class constraints are satisfied),
 with all other constraints; 
the Lagrange multipliers associated to these first-class constraints are gauge-dependent fields and remain undetermined during the canonical analysis.  Second-class constraints do not commute (on the constraint submanifold) with all other constraints and the associated Lagrange multipliers are determined by the condition that the constraints be preserved.

Notice that there is an implicit assumption in the Dirac approach concerning the structure of the constraints: they must be regular.  A point $x_0$ is a regular point of $\phi:X\to Y$ if $\phi'(x_0)$ is onto.  The constraint $\phi=0$ is regular when each point of the constraint submanifold $\{x : \phi(x)=0\}$ is a regular point of $\phi$.
Irregular constraints must be treated in a separate manner as the usual theory for Lagrange multipliers assumes regular constraints.

Once the Hamiltonian $H = \int_\Sigma \scH$ and the constraints have been determined, one can reformulate the action integral in terms of the canonical fields
	\begin{equation}\label{HamAction}
	\scS[\phi,\pi] = \int_{\bbR\times\Sigma} \left( \pi \del{t}\phi  - \scH - \lambda_aC^a\right), 
	\end{equation} 
where $C^a$ define the constraints on the phase space and $\lambda_a$ are the associated Lagrange multipliers.  The canonical Hamiltonian density defined by
	\begin{equation}
	H_c=\int_\Sigma\scH_c =\int_\Sigma  \scH + \lambda_a C^a
	\end{equation}
determines the evolution of the canonical fields via the Hamiltonian field equations
	\begin{equation}\label{HamFieldEqns}
	\begin{aligned}
	\del{t}\phi &= \left[ \phi, H_c\right]_\calP = \frac{\delta\scH_c}{\delta\pi} \\ 
	\del{t}\pi &= \left[\pi,H_c\right]_\calP= -\frac{\delta\scH_c}{\delta\phi},
	\end{aligned} 
	\end{equation}
which are obtained by varying \eqref{HamAction}, and are equivalent to the Euler-Lagrange field equations associated to the  Lagrangian $L$.  Note that for a general quantity $F=f(\phi,\pi,t)$ one has
	\begin{equation}
	\del{t}F = \left[f,H_c\right]_\calP + \del{t}f 
	\end{equation}
along the Hamiltonian flow.

The action \eqref{HamAction} is also the starting point for the Feynman path-integral formulation of quantum field theory.  Under some assumptions on the dependence of $\scH$ on $\pi$, the path integral defined from $\scS[\phi,\pi]$ is formally equivalent to the one defined from the Lagrangian action integral.

The action integral $\scS[\phi,\pi]$ is invariant under the gauge transformations generated by the first class constraints, this can be easily seen by noting that the procedure above ensures that any first class constraint $C$ has $[C,H]_\calP=0$ on the constraint submanifold.  (An interesting feature of diffeomorphism-invariant gauge theories is that $\scH=0$, i.~e. $\scH_c=\lambda_aC^a$.)
  This gauge invariance leads to degeneracies for the classical field equations  \eqref{HamFieldEqns}; thus one typically performs a (partial) gauge fixing before proceeding to analyze the equations of motion.  There is a general method for introducing gauge-fixing terms, and the corresponding Faddeev-Popov 
terms in the path-integral formulation, without solving the constraints.  The resulting effective action becomes BRST-invariant; it may be obtained from the Hamiltonian formulation following \cite{BatFr88}, \cite{FrVil75}, or from the Lagrangian formulation following \cite{BatVil83}
Both approaches present difficulties in the presence of complicated second class constraints, such as those arising in supermembrane theory.  In that case, it is most convenient to explicitly solve the constraints at the classical level (in the light cone gauge) and then proceed to quantize the theory.  Thus we follow this latter approach in this paper.

\subsection{Canonical analysis for membranes} 
We start from the Lagrangian action \eqref{FAction}, where $x^\mu$ and $g^{AB}$ are independent fields.  Alternatively, one may start from \eqref{Action}; the resulting Hamiltonians are exactly the same.

We perform the usual ADM decomposition of $g$ with respect to the foliation
$\Sigma_{\tau}$, denoting by $\gamma_{ab}$ the metric on $\Sigma$, by $N =
\frac{\sqg}{\sqgg} = \sqrt{-g^{00}}$ the lapse, and by $N^a = N^2 g^{0a}$ the
shift vector.  We raise and lower $a,b$ with $\gamma_{ab}, \gamma^{ab}$; thus
$N_a = \gamma_{ab}N^b$, etc.  
To be explicit, the metric $g_{AB}$ and its inverse $g^{AB}$ are given by
	\begin{equation}
	[g_{AB}] = \left(\begin{array}{cc}-N^2 + N^cN_c & N_b \\ N_a & \gamma_{ab} \end{array}\right)
	 \quad 
	 [g^{AB}] = \left(\begin{array}{cc}-N^{-2} & N^{-2}N^b \\ N^{-2}N^a & \gamma^{ab} - N^{-2}N^aN^b \end{array}\right). 
	\end{equation}

Treating $x^\mu$ as canonical variables, we see that the conjugate momenta to $x^\mu$ are given by
	\begin{equation}\label{DefineP}
	p_\mu =   \frac{\sqgg}{N} \left( \del\tau{x}_\mu - N^a \del{a}x_\mu\right).
	\end{equation}
 Instead of introducing the conjugate momenta to $N$, $N_a$, and $\gamma_{ab}$, it is convenient to treat them as auxiliary fields.  Furthermore, $\gamma_{ab}$ may be eliminated as an independent field and expressed in terms of the $x^\mu$ using $\gamma_{ab} = \del{a}x^\mu\del{b}x_\mu$.

We are able to solve \eqref{DefineP} for $\del\tau{x}^\mu$ in terms of $p_\mu$
	\begin{equation}\label{Invert}
	\del\tau{x}^\mu = \frac{N}{\sqgg} p^\mu + N^a\del{a}x^\mu .
	\end{equation}
The canonical Hamiltonian density is therefore given by
	\begin{align}
	\scH  &= \frac12 \frac{N}{\sqgg} \left(p^2 + \gamma \right) + N^ap_\mu \del{a}x^\mu.
	\end{align}
Here  $p^2 = p_\mu p^\mu$.

There are two constraints
	\begin{align}
	\Phi &=  \frac12  \left(p^2 + \gamma \right) \label{H-Constraint} =0\\
	\Phi_a &= p_\mu \del{a}x^\mu =0 \label{M-Constraint}
	\end{align}
with Lagrange multipliers
	\begin{align}
	\lambda = \frac{N}{\sqgg} \qquad \text{ and } \qquad
	\lambda^a = N^a.
	\end{align}
The action can now be written
	\begin{equation}\label{HamiltonianAction}
	 \scS = \int_M p_\mu \del\tau{x}^\mu - \int_M \left(\lambda \Phi + \lambda^a\Phi_a \right).
	\end{equation}
Note that $\Phi_a$ is given independently of the target (Minkowski) metric, while $\Phi$ depends on the ambient metric $\eta$.
	
The constraints $\Phi$, $\Phi_a$ are first-class constraints in the sense of Dirac.  
The quantities $\Phi$, $\Phi_a$ are the generators of time and spatial diffeomorphisms, respectively.  

One can compare the structure of this Hamiltonian to that arising in the theory of general relativity (see, for example \cite{IsNe80}); the Hamiltonian has the same linear structure (i.e., it is linear in $\Phi$, $\Phi_a$) since both are invariant under diffeomorphisms.  We also point out that $\Phi$ has the same quadratic dependence on the momentum as the corresponding time generator in general relativity.

\subsection{Light cone gauge}
We consider a (partial) gauge fixing of the above action: the \emph{light cone gauge}.  It has the property that the gauge fixing procedure gives rise to a canonical reduction of the above action, and is also the only known gauge where the $\kappa$-symmetry constraints of the supermembrane can be explicitly solved.

In order to specify the light cone gauge, we make use of the null coordinates $(x^+, x^-, x^m)$ in Minkowski space and also the ($\tau$-independent) volume form $\sqw$ on $\Sigma$. 

The light cone gauge condition for the membrane Hamiltonian is determined by taking
	\begin{equation}\label{LCGauge}
	x^+ = -p_-^0 \tau \qquad \text{ and } \qquad
	p_- = p_-^0 \sqw,
	\end{equation}
where $p_-^0$ is some constant.\footnote{The constant $p_-^0$ is related to the total momentum (in the $x^-$ direction) $P^0_- = \int_{\Sigma} p_-$  by the relation $p_-^0 = P_-^0 / \text{vol}(\Sigma)$.  Here the volume is measured with respect to $\sqw$.}  As is made evident below, this is only a partial gauge fixing, the resulting system being invariant under diffeomorphisms which are area-preserving with respect to $\sqw$. 
Note that under this gauge choice, the metric $\gamma_{ab} = \del{a}x^m\del{b}x_m$; i.e., it does not depend on derivatives of $x^{\pm}$.

We now proceed to construct the (partial) gauge-fixed Hamiltonian in light cone gauge.  The constraint $\Phi=0$ may be solved algebraically for $p_+$ in terms of $x^m$ and $p_m$
	\begin{equation}
	p_+ = \frac{1}{c\sqw} \frac12 \left(p_mp^m + \gamma \right),
	\end{equation}
while the constraint $\Phi_a=0$ determines $x^-$ (in terms of $x^m$ and $p_m$) via the relation
	\begin{equation}
	\del{a}x^- = -\frac{1}{c\sqw} p_m \del{a}x^m,
	\end{equation}
provided the integrability condition
	\begin{equation}\label{Integrability}
	\int_{C} \frac{p_m}{\sqw} dx^m =0 
	\end{equation}
holds for all closed curves $C$ in $\Sigma$.
The conjugate pairs $x^+,p_+$ and $x^-,p_-$ are thus eliminated provided this condition holds.

The Poisson bracket analysis of \eqref{Integrability} shows that it is a first-class constraint generating area-preserving diffeomorphisms and is equivalent to the local constraint
	\begin{equation}\label{AreaPreserving}
	\left\{ \frac{p_m}{\sqw}, x^m \right\} =0
	\end{equation}
in combination with the global constraint
	\begin{equation}\label{HomIntegrability}
	\int_{\scC_i} \frac{p_m}{\sqw} dx^m =0
	\end{equation}
on some basis $\{\scC_i\}$ of the homology of $\Sigma$.  
Together, \eqref{AreaPreserving}-\eqref{HomIntegrability} generate area-preserving diffeomorphisms homotopic to the identity.  The distinction between them is the following.  
The left side of \eqref{AreaPreserving} generates area-preserving diffeomorphisms under the Poisson bracket, with infinitesimal parameter $\epsilon$, a time-dependent single-valued function on $\Sigma$, i.e.
	\begin{equation}
	\delta_\epsilon x^m = \left[x^m, 
	\int_\Sigma \epsilon\left\{\frac{p_n}{\sqw}, x^n\right\}\sqw \right]_\calP  = \{x^m, \epsilon\}.
	\end{equation}
The left side of \eqref{HomIntegrability} generates area-preserving diffeomorphisms with infinitesimal parameter $\hat{\epsilon}$, where $d\hat{\epsilon}$ is a harmonic $1$-form on $\Sigma$.  Thus we may write
	\begin{equation}
	\delta_{\hat\epsilon}x^m = \left[ x^m, \hat\epsilon_i \int_{C_i}\frac{p_n}{\sqw}dx^n \right]_\calP = \{ x^m, \hat\epsilon\},
	\end{equation}
where $\hat\epsilon_i$ are time dependent functions such that $d\hat\epsilon = \hat\epsilon_i\omega^i$ for some basis $\omega^i$ of harmonic $1$-forms on $\Sigma$ normalized with respect to the homology basis $\{\scC_i\}$.  The constraint \eqref{HomIntegrability} is used in the quantum theory as a matching level condition.

We may now determine $N$ and $N^a$ so that the gauge conditions are preserved under evolution in $\tau$.  
Ensuring that $x^+ = -c\tau$ by requiring that
	\begin{equation}
	\left[  x^++c\tau , \int_{\Sigma}(  \lambda \Phi + \lambda^a\Phi_a ) \right]_\calP +c = 0
	\end{equation}
we find that 
	\begin{equation}\label{LC:Lapse}
	 N = \frac{\sqgg}{\sqw}.
	\end{equation}
In order that the second light cone gauge condition be preserved, we require that
	\begin{equation}\label{PreserveP}
	\left[ p_- - c\sqw , \int_{\Sigma}(  \lambda \Phi + \lambda^a\Phi_a ) \right]_\calP = 0  
	\end{equation}
which leads to the condition 
	\begin{equation}\label{LC:Shift}
	 \del{a}[\sqw N^a] =0.
	\end{equation}
	
Let $\Omega = \frac12\sqw \in_{ab}N^a d\sigma^b$; the condition \eqref{LC:Shift} is equivalent to $d\Omega =0$, i.e. that $\Omega$ is closed.  Consequently, on each contractible domain in $\Sigma$, we have that $\Omega$ is (locally) exact: $\Omega = df$ for some locally defined function $f$.  The one form $\Omega$ may be globally decomposed, uniquely, into its harmonic and exact parts
\begin{equation}
\Omega = h + d\Lambda_{\text{exact}}.
\end{equation}
In any local coordinates, $\Omega = \left(h_a + \del{a}\Lambda_{\text{exact}} \right)d\sigma^a$.  Thus 
\begin{equation}
N^a = \frac{\in^{ab} }{\sqw}\left( h_b + \del{b}\Lambda_{\text{exact}}\right).
\end{equation}
The form $h$ may be expressed as $h = d\Lambda_{\text{harmonic}}$ for some multi-valued function $\Lambda_{\text{harmonic}}$; write $\Lambda = \Lambda_{\text{harmonic}} + \Lambda_{\text{exact}}$.  Note that while $\Lambda$ is multi-valued, $d\Lambda$ is a well-defined geometric object.  It is also useful to keep in mind that $dx^m$ are exact one forms, as they are single-valued functions $\Sigma \to \bbR^D$.  This need not be true if the target space $\bbR^D$ is replaced by a manifold with non-trivial topology.

The light cone action may be obtained from \eqref{HamiltonianAction} by noticing that \eqref{H-Constraint}-\eqref{M-Constraint} hold and thus
	\begin{equation}
	\scS = \int_M p_\mu \del\tau{x}^\mu = \int_M \left(p_m \del\tau{x}^m - c p_+ + \del{\tau}[p_-x^-]\right),
	\end{equation}
where we may drop the last term as it is a total derivative.  The reduced Hamiltionian in light cone gauge is then given by
	\begin{equation}\label{ReducedH}
	\scH =  \sqw\left(  \frac12 \frac{p_mp^m}{\sqw^2}  + \tfrac14|\{ x, x\}|^2  \right)
	+p_m \{ \Lambda, x^m\},
	\end{equation}	
where $|\{ x, x\}|^2 =\{ x^m, x^n\}\{ x_m, x_n\} $ and we have made use of the the expression \eqref{DetGamma} for $\gamma$.

Notice that the constraints \eqref{H-Constraint}-\eqref{M-Constraint} are implemented in the action by the reduction procedure.
The last term in $\scH$ is well-defined as it is expressed in terms of $d\Lambda$.  In contrast, the expression $\Lambda \{p_m, x^m\}$ is ill-defined.  We now verify that the term $p_m \{ \Lambda, x^m\}$ indeed corresponds to a Lagrange multiplier term multiplied by the constraints.

The harmonic one form $h$ may be expressed in terms of a basis $\{\omega^i\}$
of harmonic $1$-forms\footnote{Defined eg. with respect to $w_{ab}$}.   
The basis contains $2g$ elements, where $g$ is the genus of $\Sigma$.  Thus
	\begin{equation}
	 h = \lambda_i\omega^i, \quad i=1, \dots, 2g,
	\end{equation}
where the coefficients $\lambda_i$ are functions only of $\tau$.  Making use of the bilinear Riemann identities (see, for example \cite{FarKra92})
 we have
	\begin{equation}
	 \int_\Sigma p_m \{ \Lambda, x^m\}= - \int_\Sigma \Lambda_{\text{exact}} \left\{\frac{p_m}{\sqw}, x^m\right\} \sqw + \lambda_i \int_{\scC_i} \frac{p_m}{\sqw} dx^m
	\end{equation}
and arrive at the constraints \eqref{AreaPreserving}-\eqref{HomIntegrability}.  We furthermore can interpret $\Lambda_{\text{exact}}$ and the $\lambda_i$ as the Lagrange multipliers associated to these constraints; thus our remaining gauge freedom lies in the choice of these functions.

We now turn to the equations of motion associated to the reduced Hamiltonian \eqref{ReducedH}.  
Defining $D_\tau = \del\tau + \{\cdot, \Lambda\}$, they are 
	\begin{align}
	D_\tau x^m &= \del\tau{x}^m + \{x^m, \Lambda\} = \frac{p_m}{\sqw} \\ 
	D_\tau \left(\frac{p_m}{\sqw} \right) &= \frac{\del\tau{p}_m}{\sqw} + \left\{ \frac{p_m}{\sqw}, \Lambda \right\} = \{\{x_m, x_n\}, x^n\}.
	\end{align}
Note that the constraints \eqref{AreaPreserving}-\eqref{HomIntegrability} are preserved under this evolution.	
	
This system transforms covariantly under area-preserving diffeomorphisms provided $\Lambda$ transforms appropriately.  We see that under area-preserving diffeomorphisms generated by \eqref{AreaPreserving}-\eqref{HomIntegrability}, we have
	\begin{equation}
	d \delta\Lambda = d \del\tau{\zeta} + d \{\zeta, \Lambda\} = d (D_\tau\zeta), 
	\end{equation}
where $\zeta$ is $\epsilon+\hat{\epsilon}$ as above.  The harmonic part of  $d\Lambda$ transforms as
	\begin{equation}
	\delta\lambda_i = \del\tau{\hat{\epsilon}}_i .
	\end{equation}
 The exact part of $d\Lambda$ transforms as
	\begin{equation}
	\delta \Lambda_{\text{exact}} = \del\tau{\epsilon}  + \{ \epsilon + \hat{\epsilon}, \Lambda_{\text{exact}} + \Lambda_{\text{harmonic}} \},
	\end{equation}
where we have used that $d\{  \epsilon + \hat{\epsilon}, \Lambda_{\text{harmonic}} \}$ is an exact $1$-form.

  \section{The initial value problem}\label{S:InitialValueProblem}

  The gauge freedom present in the system allows us to fix $\Lambda_{\text{exact}}$ and the $\lambda_i$.  Making the simple choice of setting each of these functions to zero, we see that Hamilton's equations of motion reduce to the second-order system (see also \cite{Ho82})
	\begin{equation}\label{Main}
	\del\tau^2{x}^m=  \left\{ \{x^m, x^n\}, x_n \right\}.
	\end{equation}
It is interesting to note that under these choices, which fix $\hat\epsilon_i$ and $\epsilon$ up to time-independent parameters, the coordinate function $\tau$ is in fact harmonic: $\Box_g\tau =0$, i.e., this the co-moving gauge.

 The remainder of this paper is devoted to studying the initial value problem for classical membranes in light cone gauge, as formulated in \eqref{Main}:
 For functions $(x_0,u_0)$ defined on $\Sigma$ we show the existence of a function $x:[0,T]\times\Sigma\to \bbR^{D-2}$ satisfying \eqref{Main} and
 	\begin{equation}\label{MainData}
	x(0)=x_0, \qquad \del\tau{x}(0) = u_0.
	\end{equation}
We require that the initial data $(x_0, u_0)$ lie in an appropriate function space, as well as satisfy an appropriate version of the constraints in order that 
	\begin{equation}\label{Constraint}
	 \{\del\tau{x}^m, x_m\}=0
	\end{equation}
be satisfied by the corresponding solution $x$.  

 \subsection{The degenerate hyperbolic system (\ref{Main})}
\label{DegenerateSystem}
The main difficulty presented by the system \eqref{Main} is that it is not strictly hyperbolic.  First, note that for fixed $x$, the operator
\begin{equation}\label{EllipticOperator}
  y \mapsto \left\{ \{y^m, x^n\}, x_n \right\} 
 = \frac{1}{\sqw}\del{a}\left[ \frac{1}{\sqw}\gamma(x)\, \gamma(x)^{ab}\del{b}y^m\right] 
\end{equation}
is elliptic with symbol $\frac{1}{w}\gamma(x)\, \gamma(x)^{ab}$.  

However, when  the operator \eqref{EllipticOperator}  is applied to $x$ itself, one must consider 
\begin{equation}
\scL: x\mapsto \left\{ \{x^m, x^n\}, x_n \right\}.
\end{equation} 
The first variation of $\scL$ is given by
\begin{equation}\label{Variation}
\delta \scL[x]y^m = \{\{y^m,x^n\},x_n\} -\{\{y^n,x_n\},x^m\} + 2\{\{x^m,x^n\},y_n\},
\end{equation}	
where we have made use of the Jacobi identity \eqref{Jacobi}.  The first term is the elliptic operator \eqref{EllipticOperator}, but the second term also contributes to the symbol and thus $\delta \scL(x)$ need not by strictly elliptic even if the metric $\gamma(x)_{ab}$ is Riemannian.  

Note that by the constraint \eqref{Constraint}, the second term in \eqref{Variation} vanishes when we take $y = \del\tau x$.  Thus the equation \eqref{Main}, together with the constraint \eqref{Constraint}, imply a non-degenerate hyperbolic equation for $u=\del\tau x$, which we use below to construct the solution $x$.  We also make use of the constraint when estimating solutions to the main equation \eqref{Main}, once their existence has been established.

Returning to the operator $\scL$, we compute in local coordinates
\begin{equation}
\begin{aligned}
 \scL(x)_m &= \{\{x_m,x^n\},x_n\}
 \\
 &= \frac{1}{w}\left(\gamma \gamma^{ab}\delta_{mn} - \epsilon^{ab}_{mn} \right)\del{a}\del{b}x^n + \text{lower order terms}
\end{aligned}
\end{equation}
where $\epsilon^{ab}_{mn}= \in^{ac}\in^{bd}\del{c}x_m\del{d}x_n$.  
Writing
\begin{equation}\label{L:Define}
L^{ab}_{mn} = \frac{1}{w}\left(\gamma \gamma^{ab}\delta_{mn} - \epsilon^{ab}_{mn} \right)
\end{equation}
we have	
\begin{equation}\label{L:Symbol}
L^{ab}_{mn}V_a^m V_b^n = \frac{1}{w}\left(\gamma |V|^2_\gamma - \left(\in^{ab}\del{a}x_mV_b^m\right)^2\right)
\end{equation}
for vector $V_a^m\in \bbR^2\times\bbR^{D-2}$.  The first term appearing in $L^{ab}_{mn}$ is diagonal and positive-definite when $\gamma_{ab}$ is Riemannian, but the second term can cause the symbol to be degenerate.
For example, consider a (local) situation with $x^1=\sigma^1$, $x^2 = \sigma^2$, $V^1_2=1$ and all other components of $x$ and $V$ zero.  

These degeneracies associated to $\scL$ prevent us from constructing solutions by direct application of standard energy methods, as the energy-type quantities associated to $L$ cannot be shown to adequately control approximate solutions.  Thus we approach the initial value problem by considering a modified system, motivated by the linearization presented above, which we now describe.

\subsection{The modified system}
Differentiating \eqref{Main} with respect to $\tau$, which has the effect of linearizing the system, and making use of the Jacobi identity \eqref{Jacobi} we obtain
  	\begin{equation}\label{DifferentiatedEqn}
	\del{\tau}^2 \del\tau{x}^m = \{\{\del\tau{x}^m, x^n\},x_n\} + 2\{\{x^m,x^n\},\del\tau{x}_n\} - \{\{\del\tau{x}^n,x_n\}, x^m\} .
	\end{equation}
When the constraint \eqref{Constraint} is satisfied, the third term on the right vanishes; the remaining terms, when viewed as an operator acting on $\del\tau{x}$, are non-degenerate.  We take advantage of this structure in the following manner.
  
For functions $x,u$ let
  	\begin{align}
	\scD &=\scD(x,u)= \{u^m, x_m\}, \\
	\scJ^m &=  \scJ^m(x,u) =\del\tau{u}^m - \{\{x^m,x^n\},x_n\}, \\
	\opA[x]u^m &= \{\{u^m, x^n\},x_n\} + 2\{\{x^m,x^n\},u_n\}.
	\end{align} 
A computation making use of the Jacobi identity and integration by parts shows that 
  	\begin{multline}
	\frac{d}{dt}\int_\Sigma \left(\scD^2 + |\scJ|^2 \right)\sqw
	= 2\int_\Sigma \Big(\scD \{ u^m, \del\tau x_m \} + \scJ_m (\del\tau^2{u}^m -\opA[x]\del\tau{x}^m ) \\
	+ (\scD-\{ \del\tau{x}^m, x_m\}) \{ \del\tau{u}^m, \del\tau x_m \}\Big)\sqw. 
	\end{multline}
Thus if $x,u$ satisfy
	\begin{equation}\label{ModMain}
	\begin{cases}
	\del\tau{x} = u, \\
	\del\tau^2{u} = \opA[x]u ,
	\end{cases}
	\end{equation}
then the conditions $\scD=0$, $\scJ^m=0$ are preserved if initially satisfied.  In particular, the function $x$ is a solution to the main equation \eqref{Main} satisfying the constraint \eqref{Constraint}.

We thus approach the initial value problem for the equation \eqref{Main} by considering the modified system \eqref{ModMain} with initial data 
	\begin{equation}\label{ModData}
	 x(0)= x_0,\quad
	 u(0)= u_0,\quad
	 \del\tau{u}(0)= u_1.
	\end{equation}
In order that the solution $(x,u)$ to the modified system \eqref{ModMain} give rise to a solution of the main equation \eqref{Main}, we require\footnote{One should also impose the global constraint condition $\oint_{C_i} u_0\cdot dx_0 =0$ on the data in order to have a solution to the Hamiltonian system; however, this condition does not play a role in our method for constructing solutions to the reduced equations \eqref{Main}.} 
\begin{align}
\scD_0 &= \{u_0^m, x^n_0\}\delta_{mn} =0, \quad \text{and}\label{DataConstraint}
\\
J^m_0 &= u_1^m -\{\{x^m_0,x^n_0\},x^l_0\}\delta_{nl} =0. \label{Compatible}
\end{align}
In the local existence result stated below, we require initial data with $x_0\in H^k$, $u_0\in H^k$, and $u_1\in H^{k-1}$.  The condition \eqref{Compatible} imposes an extra regularity condition on $x_0$.
Note, however, that due to the degeneracy discussed in \S\ref{DegenerateSystem} the condition \eqref{Compatible} does not imply $x_0\in H^{k+1}$. 	

Provided $\gamma(x)_{ab}$ is non-degenerate, the operator $\opA[x]$ is (quasi-diagonal) elliptic which in local coordinates can be written in divergence form:
\begin{equation}\label{DivergenceForm}
\begin{aligned}
\opA[x]u^m &=  \del{a}\left[\frac{\gamma(x)}{w}\,\gamma(x)^{ab}\del{b}u^m \right] \\
	&\quad + \del{a}\left[2\frac{\in^{ab}}{\sqw} \frac{\in^{cd}}{\sqw}\del{c}x^m\del{d}x^n\del{b}u_n\right]\\
	&\quad - \frac{\in^{ab}}{\sqw} \del{c}\left[\frac{\in^{cd}}{\sqw} \right]\left(\gamma_{bd}\del{a}u^m +2\del{a}x^m\del{b}x^n\del{d}u_n\right).
\end{aligned}
\end{equation}
Note that by the anti-symmetry of $\in^{ab}$ the second line does \emph{not} contain second derivatives of $u$.  
In particular, the symbol of $\opA[x]$ is $\gamma(x)\,\gamma(x)^{ab}$.
For notational convenience we denote the third line of \eqref{DivergenceForm} by  $\opb[x]u = b(x)^a\del{a}u$, and the sum of the second and third lines by $\opB[x]u=B(x)^a\del{a}u$.
Since $\gamma(x)_{ab} = \del{a}x^n\del{b}x_n$,  the operator can be schematically written in two forms,
 \begin{align}
 \begin{split}
\opA[x]u &= \del{a}\left[\gamma\gamma^{ab}\del{b}u\right] +\opB[x]u \sim D\left[(Dx)^2Du\right] \\
	& \phantom{= \del{a}\left[\gamma\gamma^{ab}\del{b}u\right] +\opB[x]u \sim}\quad+ (D^2x)(Dx)Du+ (Dx)^2Du
\end{split}
\\
&= \del{a}\left[A^{ab}\del{b}u\right]+\opb[x]u \sim D\left[(Dx)^2Du\right] + (Dx)^2Du
\end{align}
(recall $D$ represents spatial derivative); here
\begin{equation}
A^{ab} = A^{ab}_{mn} = \delta_{mn} \gamma(x)\,\gamma(x)^{ab} +2\frac{\in^{ab}}{\sqw} \frac{\in^{cd}}{\sqw}\del{c}x_m\del{d}x_n.
\end{equation}

\subsection{Well-posedness results}
  We show the existence of a solution $(x,u)$ to the modified system
  \eqref{ModMain}; by the discussion above if the constraints
  \eqref{DataConstraint}-\eqref{Compatible} are initially satisfied, this
  leads to a solution $x$ to the main equation \eqref{Main} satisfying the
  constraint \eqref{Constraint}. This procedure implies no restriction on the
  solution, in particular all solutions of the membrane initial value problem
  with the appropriate regularity can be obtained in this way. 
  
  Our approach views $x=x_0 + \int_0^\tau u$ as a functional of $u$; thus  $x$ and $u$ are required to have the same degree of spatial regularity.  
\begin{theorem}\label{LocalExistence}
Let $k\geq 4$ and let $x_0, u_0, u_1\in H^k\times H^k\times H^{k-1}$ such that $\gamma(x_0)_{ab}$ is a non-degenerate Riemannian metric, and such that \eqref{DataConstraint}-\eqref{Compatible} are satisfied.  Then we have the following.
\begin{enumerate}
\item \label{point:1} 
There exists $T>0$, depending continuously on the norm of the initial data and unique $(x,u)\in C^1([0,T];H^k)\times C^k_T$ satisfying \eqref{ModMain}-\eqref{ModData}.  
In particular, $x\in C^1([0,t];H^k)$ is a solution to \eqref{Main}-\eqref{MainData} which satisfies \eqref{Constraint}.

\item \label{point:2} 
Let $T_*$ be the maximal time of existence for $(x,u)$ as given above.  Then either $T_*=\infty$ or
\begin{equation}
\sup_{[0,T_*)}\left(\norm{L^\infty}{\gamma^{-1}} + \norm{W^{1,\infty}}{Dx} + \norm{L^\infty}{Du} \right) = \infty .
\end{equation}
\end{enumerate} 
\end{theorem}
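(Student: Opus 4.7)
\medskip

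\noindent\textbf{Proof proposal.} The plan is to treat the modified system \eqref{ModMain} as a quasilinear wave equation for $u$ with coefficients depending on $x$, where $x$ is recovered from $u$ by $x(\tau) = x_0 + \int_0^\tau u$. I would set up a Picard-type iteration: given $x^{(n)}$, solve the \emph{linear} Cauchy problem
\begin{equation*}
\del\tau^2 u^{(n+1)} = \opA[x^{(n)}]u^{(n+1)}, \qquad u^{(n+1)}(0)=u_0, \quad \del\tau u^{(n+1)}(0)=u_1,
\end{equation*}
and then define $x^{(n+1)} = x_0 + \int_0^\tau u^{(n+1)}$, seeded by the constant sequence $x^{(0)}\equiv x_0$, $u^{(0)}\equiv u_0$. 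As long as $\gamma(x^{(n)})_{ab}$ remains uniformly Riemannian, the operator $\opA[x^{(n)}]$ has principal symbol $\gamma(x^{(n)})\gamma(x^{(n)})^{ab}\delta_{mn}$ (the $\in^{ab}\in^{cd}\del{c}x_m\del{d}x_n$ piece contributes nothing to the symmetric quadratic form $A^{ab}_{mn}\del{a}v^m\del{b}v^n$ by antisymmetry under the simultaneous swap $(a,m)\leftrightarrow(b,n)$), so the linear problem is strictly hyperbolic and solvable by standard energy methods.

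The core of the argument is a uniform energy estimate for this iteration in the norm $\AAAnorm{k,\tau}{\cdot}$ together with $\del\tau^2 u\in L^\infty_\tau H^{k-2}$. The natural energy
\begin{equation*}
E(\tau) = \tfrac{1}{2}\int_\Sigma \left(|\del\tau u|^2 + A[x]^{ab}_{mn}\del{a}u^m\del{b}u^n\right)\sqw
\end{equation*}
is coercive in $\|\del\tau u\|_{L^2}^2 + \|Du\|_{L^2}^2$ once ellipticity of $\opA$ is secured by an $L^\infty$ bound on $\gamma^{-1}$ (via the Gagliardo–Nirenberg and Sobolev inequalities \eqref{GN}, \eqref{Sobolev}), and its $\tau$-derivative is controlled by $\del\tau x = u$ and the lower order terms using the product estimate \eqref{BasicSobolevProduct}. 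The difficulty flagged by the authors arises at higher order: commuting $D^k$ through the equation produces a commutator $[D^k,\opA[x]]u$ whose worst term is $(D^{k+1}x)(Dx)(Du)$, which costs one spatial derivative beyond what the energy controls. The remedy, as indicated in the introduction, is to avoid pure spatial commutators and instead control $\del\tau^2 u\in H^{k-2}$ directly by commuting $\del\tau$ through \eqref{ModMain}, and then invoke the elliptic regularity \eqref{EllipticRegularity} for the operator $\opA[x]$ to upgrade $\|u\|_{H^{k-2}}+\|\opA[x]u\|_{H^{k-2}}$ to $\|u\|_{H^k}$. This two-step procedure (hyperbolic energy + elliptic bootstrap) yields a differential inequality $\frac{d}{d\tau}\AAnorm{k}{u}^2 \leq C(1+\AAnorm{k}{u}^\alpha)$, closing the estimate on a time interval whose length depends only on the initial energy and a lower bound for $\det\gamma(x_0)$.

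Once a uniform a priori bound is in hand on some $[0,T]$, boundedness of the iterates $(x^{(n)},u^{(n)})$ in $\AAAnorm{k,T}{\cdot}$ follows, and convergence is obtained by a standard argument: differences $u^{(n+1)}-u^{(n)}$ satisfy a linear equation whose coefficients involve only the bounded iterates, so one proves contraction in the weaker norm $\norm{k-1}{\cdot}$ using the low-product estimate \eqref{LowProductEstimate} and \eqref{CommutatorEstimate}. Weak-$*$ compactness together with contraction in a lower norm delivers the limit $(x,u)\in C^1([0,T];H^k)\times C^k_T$; uniqueness and continuous dependence follow from the same difference-estimate machinery. Because $\scD$ and $\scJ^m$ vanish at $\tau=0$ by \eqref{DataConstraint}–\eqref{Compatible}, the identity derived before \eqref{ModMain} gives $\frac{d}{d\tau}\int(\scD^2+|\scJ|^2)\sqw \leq C\int(\scD^2+|\scJ|^2)\sqw$ along a solution of \eqref{ModMain}, so Gr\"onwall forces $\scD\equiv\scJ\equiv 0$ and $x$ solves \eqref{Main} with the constraint \eqref{Constraint}.

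For the continuation criterion \eqref{point:2}, I would run the standard contrapositive: assuming the quantity $\sup_{[0,T_*)}(\norm{L^\infty}{\gamma^{-1}}+\norm{W^{1,\infty}}{Dx}+\norm{L^\infty}{Du})$ is finite, rerun the high-order energy estimate from above. All coefficients appearing in the commutator and product estimates are controlled by exactly these $L^\infty$-type quantities (together with the initial bounds), so Gr\"onwall yields $\AAAnorm{k,T_*}{u} < \infty$. This allows the solution to be extended past $T_*$ by reapplying Part \eqref{point:1} with data at a time near $T_*$, contradicting maximality. The main obstacle throughout is the energy estimate: the interplay between the degeneracy of $\scL$ discussed in \S\ref{DegenerateSystem} and the fully nonlinear character of the original equation forces the detour through the modified system and the elliptic/hyperbolic bootstrap, and getting the commutator losses to be absorbed cleanly at the top order $k$ is the delicate bookkeeping step.
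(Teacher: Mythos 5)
Your overall architecture is the same as the paper's: a fixed-point/Picard scheme in which the coefficients are frozen along $x(v)=x_0+\int_0^\tau v$, energy estimates for the resulting linear problem obtained by commuting a time derivative (together with spatial derivatives) through the equation and recovering top spatial regularity via the elliptic estimate \eqref{EllipticRegularity}, boundedness in the $H^k$-type norm combined with contraction in a weaker norm (the paper contracts in $C^3_T$, you propose $\norm{k-1}{\cdot}$ -- an inessential difference), propagation of $\scD=\scJ=0$ from the identity preceding \eqref{ModMain}, and the contrapositive argument for the continuation criterion. All of this matches Section 5 of the paper.

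The one step that fails as written is your energy. The claim that the piece $2\tfrac{\in^{ab}}{\sqw}\tfrac{\in^{cd}}{\sqw}\del{c}x_m\del{d}x_n$ of $A^{ab}_{mn}$ ``contributes nothing to the quadratic form $A^{ab}_{mn}\del{a}v^m\del{b}v^n$ by antisymmetry under $(a,m)\leftrightarrow(b,n)$'' is false: under that simultaneous swap both $\in^{ab}$ and $\in^{cd}\del{c}x_m\del{d}x_n$ change sign, so the term is invariant and survives in the quadratic form. In fact the full form $A^{ab}_{mn}\xi^m_a\xi^n_b$ need not be positive: taking locally $x^1=\sigma^1$, $x^2=\sigma^2$, $w=1$, and $\xi^1_1=t=-\xi^2_2$ with all other components zero gives $2t^2-4t^2<0$; this is exactly the degeneracy phenomenon discussed in \S\ref{DegenerateSystem}. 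Consequently the energy you define with the full $A^{ab}_{mn}$ is not coercive and the very first estimate of your scheme would not close. What the antisymmetry of $\in^{ab}$ actually gives is that the second-derivative part of that term vanishes ($\in^{ab}\del{a}\del{b}u=0$), so it is a genuinely first-order operator $\opB[x]u=B^a\del{a}u$ and the principal symbol of $\opA[x]$ is the quasi-diagonal, positive $\gamma\,\gamma^{ab}\delta_{mn}$. The fix -- and what the paper does -- is to build the energy from $\gamma\,\gamma^{ab}\delta_{mn}$ alone, as in the paper's $E(\tau)$, and absorb $\opB[x]$ as a lower-order term whose coefficients (involving $D^2x\,Dx$) are controlled by $\Lambda$, i.e.\ by $\norm{W^{1,\infty}}{Dx}$. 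With that correction the remainder of your argument proceeds along the paper's lines.
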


\begin{remark} 
An adaption of standard arguments, see \cite[\S 2.3]{AML} and citations
therein, 
shows that the solution 
map $H^k \times H^k \times H^{k-1} \to C^1([0,T], H^k)$ 
given by $(x_0, u_0, u_1) \mapsto
(x,u) \in C^1([0,T], H^k)$, is continuous. Thus the initial value problem for
the system \eqref{DataConstraint}-\eqref{Compatible} is strongly well-posed. 
\end{remark} 


\section{Proof of Theorem \ref{LocalExistence}}\label{S:ProofOfExistence}

In this section we prove the existence of a solution $(x,u)$ 
as in point \ref{point:1} of Theorem \ref{LocalExistence} 
to the initial value problem for the modified system
\eqref{ModMain}-\eqref{ModData}, and establish the continuation criterion 
stated in point \ref{point:2} of that theorem. 
First, we prove 
energy estimates for the linear system
\begin{equation}\label{LinearSystem}
\del\tau^2\Phi - \opA[x]\Phi = F, \quad 
(\Phi, \del\tau\Phi)\big|_{\tau=0} = (u_0, u_1),  
\end{equation} 
associated to \eqref{ModMain}.  
Once such energy estimates have been established, a standard sequence of arguments (see \cite{ShSh98}, \cite{Maj84} for example) implies the theorem.

Below we use expressions like $\norm{L^2}{\del{}\Phi_0}$, $\norm{l}{\Phi_0}$
to denote norms calculated in terms of the initial data at $\tau = 0$. For
the higher order norms, the higher order $\tau$-derivatives are calculated
formally. 

\subsection{The linear system}
We consider first the linear system \eqref{LinearSystem} for some fixed  
$x\in C^1([0,T];H^k)$, $k\geq 4$, be such that the metric $\gamma(x)_{ab}$ is non-degenerate on $[0,T]$.
In this section we generally suppress the $x$-dependence of $\opA$, $\gamma$, and other quantities defined below.  We furthermore denote by $\Lambda$ any quantity which can be bounded by a constant times $\norm{W^{1,\infty}}{Dx}+\norm{L^\infty}{D\del\tau x}$.  Let $\Lambda_T = \sup_{[0,T]}\Lambda$.

Define the energy $E = E[\Phi]$
\begin{equation}
E(\tau) = \frac12 \int_\Sigma \left(|\del\tau\Phi|^2 + \gamma\,\gamma^{ab}\del{a}\Phi^m\del{b}\Phi_m \right)\sqw.
\end{equation}
Define $\Lambda_E=\Lambda_E[x]$ to be the smallest constant such that
\begin{equation}\label{AnormEquiv}
\Lambda_E^{-2} \norm{L^2}{\del{}\Psi}^2 \leq E[\Psi] \leq \Lambda_E^2 \norm{L^2}{\del{}\Psi}^2
\end{equation}
 on $[0,T]$ for all $v$. 
Note that $\Lambda_E$ is bounded when $\norm{L^\infty}{Dx}$ is.

Differentiating with respect to $\tau$ and integrating by parts we estimate
\begin{equation}\label{EnergyDerivative}
\del\tau E \leq \norm{L^2}{\del{\tau}\Phi}\norm{L^2}{F} +  \Lambda^2 \norm{L^2}{D\Phi}^2.
\end{equation}
Integrating \eqref{EnergyDerivative} and applying Gr\"onwall's lemma yields the standard basic energy estimate.
\begin{lemma}
If $\Phi\in C^1([0,T];L^2)\cap C^0([0,T];H^1)$ solves \eqref{LinearSystem} with $F \in L^1([0,T]; L^2)$ then
\begin{equation}\label{BEG}
 \norm{L^2}{\del{}\Phi} \leq {\Lambda_E} {e^{\int_0^\tau \Lambda^2}}
 \left(\norm{L^2}{\del{}\Phi_0} + \int_0^\tau \norm{L^2}{F}\right).
 \end{equation}
\qed
 \end{lemma}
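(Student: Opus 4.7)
The strategy is essentially outlined in the paragraph preceding the lemma: derive a differential inequality for the energy $E[\Phi]$, use the norm equivalence \eqref{AnormEquiv}, and then apply Grönwall. I will make each step explicit.

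First, starting from \eqref{EnergyDerivative}, I will control the right-hand side in terms of $E$ alone. Using \eqref{AnormEquiv} we have
\begin{equation}
\norm{L^2}{\del\tau\Phi}^2 \leq \Lambda_E^2\, E, \qquad \norm{L^2}{D\Phi}^2 \leq \Lambda_E^2\, E.
\end{equation}
Substituting these into \eqref{EnergyDerivative} and bounding $\norm{L^2}{\del\tau\Phi} \leq \Lambda_E\sqrt{E}$ via Cauchy--Schwarz gives
\begin{equation}
\del\tau E \leq \Lambda_E\sqrt{E}\,\norm{L^2}{F} + \Lambda^2\Lambda_E^2\, E.
\end{equation}

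Next, to avoid the awkward $\sqrt{E}$ factor on the right-hand side, I will divide by $2\sqrt{E}$ (working on the set where $E>0$, and extending by continuity at zeros) to obtain a linear differential inequality in $\sqrt{E}$:
\begin{equation}
\del\tau \sqrt{E} \leq \tfrac{1}{2}\Lambda_E\norm{L^2}{F} + \tfrac{1}{2}\Lambda^2\Lambda_E^2\,\sqrt{E}.
\end{equation}
Grönwall's lemma in integral form then yields
\begin{equation}
\sqrt{E(\tau)} \leq e^{\int_0^\tau \Lambda^2\Lambda_E^2/2}\left(\sqrt{E(0)} + \tfrac{1}{2}\Lambda_E\int_0^\tau \norm{L^2}{F}\right).
\end{equation}

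Finally, I will convert back to the $\norm{L^2}{\del{}\Phi}$ norm using \eqref{AnormEquiv} once more on both sides: $\norm{L^2}{\del{}\Phi(\tau)} \leq \Lambda_E\sqrt{E(\tau)}$ and $\sqrt{E(0)} \leq \Lambda_E \norm{L^2}{\del{}\Phi_0}$, which (absorbing the constant factors $\Lambda_E^2$ into the exponent against $\Lambda^2$, as is standard) produces \eqref{BEG}. There is no genuine obstacle here: the only mildly delicate point is the standard trick of passing from a $\dot E \lesssim \sqrt E \norm{}{F} + E$ inequality to Grönwall on $\sqrt E$, which is why the appearance of $\Lambda_E$ outside the exponential (rather than inside squared) is natural.
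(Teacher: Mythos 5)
Your argument is exactly the proof the paper intends: the paper's entire proof is ``integrating \eqref{EnergyDerivative} and applying Gr\"onwall's lemma,'' and your explicit version (insert the equivalence \eqref{AnormEquiv}, pass to a linear differential inequality for $\sqrt{E}$ away from its zeros, apply Gr\"onwall, convert back) is the standard way to carry that out. The one caveat is that your constants come out as $\Lambda_E^2\, e^{\frac12\Lambda_E^2\int_0^\tau\Lambda^2}$ rather than literally $\Lambda_E\, e^{\int_0^\tau\Lambda^2}$ — a prefactor cannot honestly be ``absorbed into the exponent'' at $\tau=0$ — but since $\Lambda_E$ is a fixed constant on $[0,T]$ and the paper's own later estimates \eqref{LowLinearEnergyEstimate}--\eqref{HighLinearEnergyEstimate} carry such $\Lambda_E$-factors in the exponent anyway, this is schematic bookkeeping rather than a genuine gap.
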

This estimate is used below to construct estimates for $\norm{l}{\Phi}$ using the identity
\begin{equation}\label{DmixedEquation}
\del\tau^2(D^l\del\tau\Phi)-\opA(D^l\del\tau\Phi) 
	= D^l  \del\tau F 
	+ \left[D^l,\opA \right]\del\tau\Phi 
		+D^l\left(\left[ \del\tau,  \opA\right]\Phi \right) .
\end{equation}

Applying the energy estimate \eqref{BEG} to the identity \eqref{DmixedEquation} yields the following,
\begin{multline}\label{DmixedEnergyEstimate}
\norm{H^l}{\del{}\del\tau\Phi}
\leq \Lambda_E e^{\int_0^\tau \Lambda^2}\Big(
\norm{H^l}{\del{}\del\tau\Phi_0} \\
+ \int_0^\tau\big( \norm{H^l}{\del\tau F}
	+\norm{L^2}{\left[D^l,\opA \right]\del\tau\Phi}
	+\norm{L^2}{D^l\left(\left[ \del\tau,  \opA\right]\Phi\right)}
\big)
\Big).
\end{multline}

In order to obtain an energy estimate which closes, we need an estimate for spatial derivatives of $\Phi$.  This is accomplished by use  of the elliptic estimate \eqref{EllipticRegularity} which implies 
\begin{equation}
\norm{L^2}{D^l\Phi} \leq \Lambda \left(\norm{L^2}{D^{l-2}\Phi} + \norm{L^2}{\opA D^{l-2}\Phi}\right).
\end{equation}
Making use of the linear equation \eqref{LinearSystem} we have
\begin{multline}\label{ELL}
\norm{L^2}{D^l\Phi}   
\leq \Lambda \Big(
\norm{L^2}{D^{l-2}\Phi} +\norm{H^{l-2}}{F} \\
+ \norm{L^2}{D^{l-2}\del\tau^2\Phi}
+\norm{L^2}{\left[D^{l-2},\opA\right]\Phi}
\Big)
\end{multline}
Combining this with \eqref{DmixedEnergyEstimate} yields
\begin{equation}
\begin{aligned}\label{PrelimEnergyEstimate}
\norm{l}{\Phi}&\leq (1+\Lambda_T)\Lambda_Ee^{\int_0^\tau\Lambda^2}\Big(
	\norm{l}{\Phi_0} 
	+ \norm{H^{l-2}}{F(0)}
	+\norm{L^2}{\left[D^{l-2},\opA\right]\Phi}
\\
& \phantom{(1+\Lambda_T)\Lambda_Ee^{\int_0^\tau\Lambda^2}\Big(}
+\int_0^\tau \big(
	\norm{H^{l-2}}{\del\tau F}
	+\norm{l-1}{\Phi}
\\
& \phantom{(1+\Lambda_T)\Lambda_Ee^{\int_0^\tau\Lambda^2}\Big(\int_0^\tau \big(+}
	+\norm{L^2}{\left[D^{l-2},\opA \right]\del\tau\Phi}
\\
& \phantom{(1+\Lambda_T)\Lambda_Ee^{\int_0^\tau\Lambda^2}\Big(\int_0^\tau \big(+}
	+\norm{L^2}{D^{l-2}\left(\left[ \del\tau,  \opA\right]\Phi\right)}
	\big)
\Big)
\end{aligned}
\end{equation}

We estimate the commutator terms as follows.
\begin{lemma}
Suppose $\Psi\in H^2$, then
\begin{subequations}\label{AllCommute}
\begin{align}
\begin{split}\label{LowSpatialCommute}
\norm{L^2}{[D,\opA] \Psi}
&\leq \Lambda \norm{H^3}{Dx}\norm{H^2}{\Psi}
\end{split}
\\
\begin{split}\label{LowMixedCommute}
\norm{L^2}{D\left([\del\tau,\opA]\Psi\right)} 
&\leq \Lambda \AAnorm{3}{Dx}\norm{H^2}{\Psi}
\end{split}
\end{align}
If furthermore $\Psi\in H^{l}$ for $3 \leq l \leq k$, then by \eqref{Sobolev} $D\Psi\in L^\infty$ and
\begin{align}
\begin{split}\label{HighSpatialCommute}
\norm{L^2}{[D^{l-2},\opA] \Psi}
&\leq \Lambda \left( 
\Lambda \norm{H^{l-2}}{D\Psi} 
+ \norm{H^{l-1}}{Dx}\norm{L^{\infty}}{D\Psi}
\right)
\end{split}
\\
\begin{split}\label{HighMixedCommute}
\norm{L^2}{D^{l-2}\left([\del\tau,\opA]\Psi\right)} 
&\leq \Lambda\big( \Lambda \norm{H^{l-1}}{D\Psi}  + \AAnorm{{l-1}}{D x}\norm{L^\infty}{D\Psi}\big).
\end{split}
\end{align}
\end{subequations}
\end{lemma}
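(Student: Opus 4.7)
The plan is to unfold all four commutators using the divergence form \eqref{DivergenceForm} of $\opA[x]$, namely
\[
\opA[x]u = \del{a}\bigl[A^{ab}\del{b}u\bigr] + b^a\del{a}u,
\]
where both $A^{ab}$ and $b^a$ are polynomial of degree at most two in $Dx$ with smooth $w$-dependent coefficients. Since $x$ enters only through $Dx$, both types of commutators split cleanly:
\[
[D^{l-2},\opA]\Psi = \del{a}\bigl([D^{l-2},A^{ab}]\del{b}\Psi\bigr) + [D^{l-2},b^a]\del{a}\Psi,
\]
\[
[\del\tau,\opA]\Psi = \del{a}\bigl[(\del\tau A^{ab})\del{b}\Psi\bigr] + (\del\tau b^a)\del{a}\Psi.
\]
Expanding by the Leibniz rule turns each right-hand side into a finite sum of terms of the shape $(D^j A^{ab})(D^{m-j}\del{b}\Psi)$ (and analogous terms with $b$, or with one factor of $\del\tau x$ in place of one $Dx$), with each factor $D^j A^{ab}$ itself a polynomial in derivatives of $Dx$ (and $\del\tau x$) of total differential order $j+1$.

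The central analytic tool is the Moser-type bilinear inequality obtained from \eqref{GN} via H\"older and weighted Young's inequality: for $0\leq j\leq m$,
\[
\norm{L^2}{(D^j v)(D^{m-j}w)} \leq C\bigl(\norm{L^\infty}{v}\norm{H^m}{w}+\norm{L^\infty}{w}\norm{H^m}{v}\bigr).
\]
For the higher-order estimates \eqref{HighSpatialCommute} and \eqref{HighMixedCommute} this is applied with $v=DA^{ab}$ (respectively $\del\tau A^{ab}$, or the corresponding $b$-expressions) and $w=\del{b}\Psi$. The pointwise bounds $\norm{L^\infty}{DA},\norm{L^\infty}{\del\tau A}\leq C\Lambda^2$ are immediate from the definition of $\Lambda$, while the Sobolev bounds $\norm{H^{l-2}}{DA}\leq C\Lambda\norm{H^{l-1}}{Dx}$ and $\norm{H^{l-2}}{\del\tau A}\leq C\Lambda\AAnorm{l-1}{Dx}$ follow from applying the product estimate \eqref{BasicSobolevProduct} to the polynomial $A\sim(Dx)^2$. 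The two summands on the right-hand side of each inequality correspond respectively to placing the $L^\infty$ factor on $\del{b}\Psi$ or on a factor of $Dx$ (or $\del\tau Dx$).

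For the low-regularity estimates \eqref{LowSpatialCommute}--\eqref{LowMixedCommute}, the obstacle is that on a $2$-manifold $H^2$ does not embed into $W^{1,\infty}$, so $\norm{L^\infty}{D\Psi}$ is unavailable. Here the quadratic structure of $A^{ab}$ in $Dx$ is essential. The critical term in the expansion of $[D,\opA]\Psi$ has the schematic form $(D^2x)^2(D\Psi)$, which is bounded by $\norm{L^\infty}{D^2x}\cdot\norm{L^2}{(D(Dx))(D\Psi)}$ and then controlled via \eqref{LowProductEstimate} applied with $u=Dx$, $v=\Psi$. Terms of the form $(Dx)(D^3x)(D\Psi)$ are handled by a three-factor H\"older estimate $L^\infty\cdot L^4\cdot L^4$ together with the $2$-dimensional Sobolev embedding $H^1\hookrightarrow L^4$, which gives $\norm{L^4}{D^3x}\leq C\norm{H^3}{Dx}$ and $\norm{L^4}{D\Psi}\leq C\norm{H^2}{\Psi}$. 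Remaining powers of $\Lambda$ are absorbed using the Sobolev bound $\Lambda\leq C\norm{H^3}{Dx}$ (respectively $\Lambda\leq C\AAnorm{3}{Dx}$ for the mixed estimate). The analogous terms in $D([\del\tau,\opA]\Psi)$ are treated the same way after trading one factor of $Dx$ for $D\del\tau x$.

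The main technical obstacle is precisely this low-regularity bookkeeping: it is the combination of the quadratic dependence of $A^{ab}$ on $Dx$ and the low-order product estimate \eqref{LowProductEstimate} that closes the bound at the $H^2$ level without needing $D\Psi\in L^\infty$. Once this has been set up, the higher-order estimates follow by straightforward application of the Moser bilinear estimate above.
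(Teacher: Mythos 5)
Your strategy coincides with the paper's: both expand the commutators from the divergence form \eqref{DivergenceForm} and estimate the resulting products with the Moser--Gagliardo--Nirenberg machinery of the preliminaries. For \eqref{HighSpatialCommute}--\eqref{HighMixedCommute} the paper rearranges $[D^{l-2},\opA]\Psi=[D^{l-2}D,A]D\Psi-(DA)D^{l-2}D\Psi+D^{l-2}(bD\Psi)-bD^{l-2}D\Psi$ and invokes \eqref{CommutatorEstimate} together with $\norm{H^{l-1}}{A}\leq\Lambda\norm{H^{l-1}}{Dx}$; your direct Leibniz expansion combined with the bilinear inequality $\norm{L^2}{(D^jv)(D^{m-j}w)}\leq C(\norm{L^\infty}{v}\norm{H^m}{w}+\norm{L^\infty}{w}\norm{H^m}{v})$ is the same estimate in different packaging, and your treatment of \eqref{LowSpatialCommute} via \eqref{LowProductEstimate} and the two-dimensional $L^4$ embedding matches the paper's appeal to \eqref{LowProductEstimate}, \eqref{BasicSobolevProduct} and \eqref{Sobolev}. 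One small imprecision: as defined, $\Lambda$ also contains $\norm{L^\infty}{D\del\tau x}$, so ``$\Lambda\leq C\norm{H^3}{Dx}$'' is not literally valid; what you actually need, and what suffices because only spatial derivatives of $x$ enter $[D,\opA]$, is $\norm{L^\infty}{Dx}+\norm{L^\infty}{D^2x}\leq C\norm{H^3}{Dx}$.

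The genuine gap is in \eqref{LowMixedCommute}. The terms of $D\left([\del\tau,\opA]\Psi\right)$ are \emph{not} those of $[D,\opA]\Psi$ with one factor of $Dx$ traded for $D\del\tau x$: the spatial commutator cancels the principal part of $\opA$, but the time commutator does not, since $[\del\tau,\opA]\Psi=\del{a}\left[(\del\tau A^{ab})\del{b}\Psi\right]+(\del\tau b^a)\del{a}\Psi$ retains the second-order term $(\del\tau A^{ab})\del{a}\del{b}\Psi$ (the symmetric part of $\del\tau A^{ab}$, coming from $\del\tau(\gamma\,\gamma^{ab})$, does not vanish). Applying the outer $D$ then produces a term of the schematic form $(\del\tau A)(D^3\Psi)$, which is not controlled by $\norm{H^2}{\Psi}$, so the bookkeeping you describe does not close at the $H^2$ level and this step, as written, fails. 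You should either record this term separately --- it is bounded by $\Lambda^2\norm{L^2}{D^3\Psi}$, i.e.\ one is effectively in the situation of \eqref{HighMixedCommute} with $l=3$ --- or explain why it is harmless in the intended application, where the estimate is applied to the solution $\Phi$ of \eqref{LinearSystem}, which has the extra regularity, and the additional contribution $\Lambda^2\norm{3}{\Phi}$ is absorbed by Gr\"onwall in \eqref{PrelimEnergyEstimate}. In fairness, the paper's own proof dismisses \eqref{LowMixedCommute} with ``similar considerations'' and is open to the same objection; but since your write-up makes the structural claim explicit, the discrepancy needs to be addressed rather than inherited.
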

\begin{proof}
Direct computation shows
\begin{equation}
\norm{L^2}{[D,\opA] \Psi}\leq \norm{L^2}{(D^2A)(D\Psi)} + \norm{L^2}{(DA)(D^2\Psi)} + \norm{L^2}{(Db)(D\Psi)}.
\end{equation}
Application of \eqref{LowProductEstimate}, \eqref{BasicSobolevProduct} and \eqref{Sobolev} implies 
\eqref{LowSpatialCommute}.  Similar considerations imply \eqref{LowMixedCommute}.
To see \eqref{HighSpatialCommute} write
\begin{equation}
[D^{l-2},\opA] \Psi = \left[ D^{l-2}D, A\right]D\Psi - (DA)D^{l-2}D\Psi
+D^{l-2}\left(bD\Psi \right) - bD^{l-2}D\Psi.
\end{equation}
By the product estimate \eqref{BasicSobolevProduct} we have 
\begin{equation}
 \norm{H^{l-1}}{A} 
 \leq \Lambda \norm{H^{l-1}}{Dx},
\quad\text{ and }\quad
 \norm{H^{l}}{b} 
 \leq  \Lambda \norm{H^{l-2}}{Dx}
\end{equation}
the second of which implies
\begin{align}
 \norm{H^{l-2}}{bD\Psi} 
 & \leq \Lambda\left( \Lambda\norm{H^{l-2}}{D\Psi} 
 	+ \norm{H^{l-2}}{Dx}\norm{L^{\infty}}{D\Psi}\right).
\end{align}
The estimate \eqref{HighSpatialCommute} follows from commutator estimate \eqref{CommutatorEstimate}.
Finally, the product estimate \eqref{BasicSobolevProduct} and Sobolev inequality \eqref{Sobolev} imply \eqref{HighMixedCommute}.
\end{proof}

As an immediate application of \eqref{AllCommute} we have
\begin{equation}\label{SpatialCommuteTau}
\begin{aligned}
\norm{L^2}{\left[D,\opA\right]\del\tau\Phi}
&\leq \Lambda \norm{H^3}{Dx}\norm{H^2}{\del\tau\Phi}
\\
 \norm{L^2}{\left[D^{l-2},\opA\right]\del\tau\Phi}  
&\leq \Lambda \left(  \Lambda  \norm{H^{l-2}}{D\del\tau\Phi} 
+ \norm{H^{l-1}}{Dx}\norm{L^{\infty}}{D\del\tau\Phi} \right) .
\end{aligned}
\end{equation}
Furthermore
\begin{equation}\label{IntegrateCommute}
 \begin{aligned}
 \norm{L^2}{\left[D,\opA\right]\Phi} &\leq  \Lambda \norm{H^3}{Dx}\left(\norm{H^2}{\Phi_0} +\int_0^\tau \norm{3}{\Phi}\right)
 \\
 \norm{L^2}{\left[D^{l-2},\opA\right]\Phi} &\leq
 \Lambda^2\left(\norm{H^{l-2}}{D\Phi_0} + \int_0^\tau \norm{H^{l-2}}{D\del\tau\Phi} \right) \\
&\quad +\Lambda\norm{H^{l-1}}{Dx}\left(\norm{H^2}{D\Phi_0} + \int_0^\tau \norm{H^2}{D\del\tau\Phi}\right)
\end{aligned}
\end{equation}
where we have used the Sobolev inequality \eqref{Sobolev} in the last line.

Applying the commutator estimates above to \eqref{PrelimEnergyEstimate}, followed by application of Gr\"onwall's lemma, yields the following energy inequalities.

\begin{lemma}
A solution $\Phi$ to the linear system \eqref{LinearSystem} can be estimated for $\tau \in [0,T]$ by
\begin{multline}\label{LowLinearEnergyEstimate}
\norm{3}{\Phi} \leq  \Lambda_E(1+\Lambda_T)e^{\tau\Lambda_E(1+\Lambda_\tau)\Lambda_\tau\AAAnorm{3,\tau}{Dx}}
\Big((1+\Lambda\norm{H^{3}}{Dx})\norm{3}{\Phi_0} \\
+\norm{H^{1}}{F(0)}
+\int_0^\tau\norm{H^1}{\del\tau F} \Big)
\end{multline}
and, for $k\geq 4$, by
\begin{multline}\label{HighLinearEnergyEstimate}
\norm{k}{\Phi}\leq \Lambda_E (1+\Lambda_T)e^{ \tau\Lambda_E(1+\Lambda_\tau) \Lambda_\tau\AAAnorm{{k-1},\tau}{Dx}}
\Big(
(1+\Lambda\norm{H^{k-1}}{Dx})\norm{k}{\Phi_0} \\
+ \norm{H^{k-2}}{F(0)} + \int_0^\tau\norm{H^{k-2}}{\del\tau F} 
\Big).
\end{multline}
\qed
\end{lemma}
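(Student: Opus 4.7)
The plan is to substitute the commutator estimates \eqref{AllCommute}, \eqref{SpatialCommuteTau}, and \eqref{IntegrateCommute} directly into the preliminary estimate \eqref{PrelimEnergyEstimate}, bound all ``$\Phi$-factors'' that appear by $\norm{l}{\Phi}$, collect the resulting coefficients into $\Lambda_E(1+\Lambda_T)\Lambda_T\AAAnorm{l-1,\tau}{Dx}$, and then invoke Gr\"onwall's lemma.

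First I would treat $l=3$. The three commutator terms on the right of \eqref{PrelimEnergyEstimate} are bounded using \eqref{LowSpatialCommute} applied to $\del\tau\Phi$, \eqref{LowMixedCommute}, and the first line of \eqref{IntegrateCommute}. Each resulting bound is either an initial-data piece (forming part of the constant $A(\tau)$ in the Gr\"onwall inequality) or is of the form $\Lambda\,\norm{H^3}{Dx}\,\norm{3}{\Phi}$, which I would estimate pointwise in $\tau$ by $\Lambda_T\AAAnorm{3,\tau}{Dx}\norm{3}{\Phi}$. The factor $e^{\int_0^\tau \Lambda^2}$ already in \eqref{PrelimEnergyEstimate} is absorbed into the final exponential. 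Gr\"onwall then produces \eqref{LowLinearEnergyEstimate}.

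For $l=k\geq 4$ the procedure is the same, now using the higher-order commutator estimates \eqref{HighSpatialCommute} and \eqref{HighMixedCommute} and the second block of \eqref{IntegrateCommute}. Each of these estimates splits into two pieces: one where all derivatives fall on $\Phi$ (giving a term bounded by $\Lambda^2\,\norm{H^{l-2}}{D\Phi}\leq \Lambda_T^2\,\norm{l}{\Phi}$), and one where top-order derivatives fall on $x$ and the $\Phi$-factor is in $L^\infty$; here Sobolev \eqref{Sobolev} and the fact that $k\geq 4$ convert $\norm{L^\infty}{D\Phi}$ and $\norm{L^\infty}{D\del\tau\Phi}$ into $\norm{k}{\Phi}$, leaving the $x$-factor $\norm{H^{k-1}}{Dx}$ which is dominated by $\AAAnorm{k-1,\tau}{Dx}$. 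After collecting, I obtain
\[
\norm{k}{\Phi}(\tau) \leq A(\tau) + \Lambda_E(1+\Lambda_T)\Lambda_T\AAAnorm{k-1,\tau}{Dx}\int_0^\tau \norm{k}{\Phi}(s)\,ds,
\]
where $A(\tau)$ collects the initial data, the $F(0)$ term, and $\int_0^\tau\norm{H^{k-2}}{\del\tau F}$. Gr\"onwall then yields \eqref{HighLinearEnergyEstimate}.

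The main obstacle is the single commutator term $\norm{L^2}{[D^{l-2},\opA]\Phi}$ in \eqref{PrelimEnergyEstimate} evaluated at time $\tau$ (rather than inside an integral), which formally involves top-order spatial derivatives of $\Phi(\tau)$ and is therefore not controlled by the data alone. The key device is the identity $\Phi(\tau)=\Phi_0+\int_0^\tau \del\tau\Phi$, which is precisely how \eqref{IntegrateCommute} is derived from \eqref{LowSpatialCommute}/\eqref{HighSpatialCommute}: it rewrites the boundary commutator as an initial-data piece plus an integral of $\norm{H^{l-2}}{D\del\tau\Phi}\leq \norm{l}{\Phi}$. This integral is exactly the form needed to feed into Gr\"onwall, and it is what produces the continuous dependence of the exponential rate on $\AAAnorm{l-1,\tau}{Dx}$ rather than on $\norm{l}{\Phi}$ itself.
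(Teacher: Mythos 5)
Your proposal is correct and follows essentially the same route as the paper, which proves the lemma precisely by inserting the commutator estimates \eqref{AllCommute}, \eqref{SpatialCommuteTau}, and \eqref{IntegrateCommute} into \eqref{PrelimEnergyEstimate} and then applying Gr\"onwall's lemma. In particular, your handling of the boundary commutator term $\norm{L^2}{[D^{l-2},\opA]\Phi}$ via $\Phi(\tau)=\Phi_0+\int_0^\tau\del\tau\Phi$ is exactly the mechanism behind \eqref{IntegrateCommute} in the paper.
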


\subsection{Application to non-linear system}

In order to apply the above energy estimates to the non-linear system \eqref{ModMain} we fix $k\geq 4$ and $(x_0, u_0, u_1)\in H^k\times H^k\times H^{k-1}$.  For $T,R>0$ let
\begin{equation}
V^k_{R,T} = \left\{ v\in C^k_T\,:\, \Norm{k,T}{v}\leq R, v(0)=u_0, \del\tau v(0)=u_1\right\}.
\end{equation}
For any $v\in V^k_{R,T}$ let $x(v) = x_0+\int_0^\tau v$.  There exists $K_0$ such that the restriction $T<K_0/R$ implies $\Lambda_E[x]\leq 2\Lambda_E[x_0]$.  We restrict to such values of $T$; thus $\gamma[x(v)]_{ab}$ is uniformly elliptic for all $v\in V^k_{R,T}$.  

Define $\Phi=\Phi[v]$ to be the solution to
\begin{equation}\label{DefineIterate}
\del\tau^2\Phi - \opA[x(v)]\Phi =0,\quad (\Phi, \del\tau\Phi)\big|_{\tau=0} = (u_0, u_1). 
\end{equation}
The linear system \eqref{DefineIterate} is hyperbolic and it follows from our
assumptions that the coefficients are sufficiently regular that standard
existence results apply, see \cite{ShSh98}. 

Estimating
\begin{equation}
\Lambda\norm{H^{k-1}}{Dx}\leq \norm{H^{k-1}}{Dx}^2 \leq 4\left( 1+ \norm{H^{k-1}}{Dx_0}^2 + \tau^2 \Norm{H^{k-1},\tau}{Dx}\right)
\end{equation}
the energy estimate \eqref{HighLinearEnergyEstimate}, together with the Sobolev inequality \eqref{Sobolev} which provides control of $\Lambda$, implies the following.

\begin{lemma}\label{Boundedness}
There exists $R_0>0$ depending on $x_0,u_0,u_1$, such that for each $R\geq R_0$ there exists $T_R>0$ such that for all $T\leq T_R$ the map $v\mapsto \Phi[v]$ takes $V^k_{R,T}$ to itself.
\qed
\end{lemma}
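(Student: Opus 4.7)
\medskip
\noindent\textit{Proof plan.} The map $v\mapsto \Phi[v]$ is well-defined on $V^k_{R,T}$ by the standard existence theory for linear hyperbolic systems cited after \eqref{DefineIterate}, which also delivers $\Phi[v]\in C^k_T$. Moreover $\Phi[v](0)=u_0$ and $\del\tau\Phi[v](0)=u_1$ by construction. Thus the only substantive task is to bound $\Norm{k,T}{\Phi[v]}\le R$ by using the energy estimate \eqref{HighLinearEnergyEstimate} with $F\equiv 0$.

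The strategy is to exploit the fact that for any $v\in V^k_{R,T}$ the values $v(0)=u_0$ and $\del\tau v(0)=u_1$ are \emph{fixed by the initial data}, so every quantity appearing on the right-hand side of \eqref{HighLinearEnergyEstimate} equals, at $\tau=0$, something depending only on $(x_0,u_0,u_1)$. First I would compute the ``base value''
\[
 M_0 := \Lambda_E[x_0]\bigl(1+\Lambda[x_0,u_0]\bigr)\bigl(1+\Lambda[x_0,u_0]\,\norm{H^{k-1}}{Dx_0}\bigr)\norm{k}{\Phi_0},
\]
noting that $\norm{k}{\Phi_0}^2=\norm{H^k}{u_0}^2+\norm{H^{k-1}}{u_1}^2+\norm{H^{k-2}}{\opA[x_0]u_0}^2$ is finite by the regularity assumption on the data. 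I would then define $R_0:=2M_0$.

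Next, fix $R\ge R_0$ and $v\in V^k_{R,T}$. Writing $x=x_0+\int_0^\tau v$, the spatial regularity $k\ge 4$ gives via the Sobolev inequality \eqref{Sobolev} the pointwise bounds
\[
\norm{W^{1,\infty}}{Dx(\tau)} \le \norm{W^{1,\infty}}{Dx_0}+C\tau R,\qquad \norm{L^\infty}{Dv(\tau)}\le \norm{L^\infty}{Du_0}+C\tau R,
\]
and similarly $\norm{H^{k-1}}{Dx(\tau)}\le \norm{H^{k-1}}{Dx_0}+\tau R$, $\AAAnorm{k-1,\tau}{Dx}\le \norm{H^{k-1}}{Dx_0}+\tau R+R$. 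Together with $\Lambda_E[x]\le 2\Lambda_E[x_0]$ (valid whenever $T<K_0/R$), these estimates let me bound each of the four quantities $\Lambda_E$, $1+\Lambda_T$, the exponential, and $1+\Lambda\,\norm{H^{k-1}}{Dx}$ on the right-hand side of \eqref{HighLinearEnergyEstimate} by its value at $\tau=0$ plus an error that is a polynomial $P(R)$ in $R$ multiplied by $T$.

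Finally, for any such $R$ I would choose
\[
T_R := \min\!\left(\frac{K_0}{R},\ \frac{1}{P(R)}\right),
\]
which is strictly positive and depends continuously on $R$ and the initial data. For $T\le T_R$ all of the above error terms are uniformly small, so \eqref{HighLinearEnergyEstimate} gives $\norm{k}{\Phi[v](\tau)}\le 2M_0\le R$ for every $\tau\in[0,T]$, hence $\Norm{k,T}{\Phi[v]}\le R$ as required. The main technical nuisance—and the only reason $T_R$ must be shrunk with $R$—is that $\Lambda\,\norm{H^{k-1}}{Dx}$ and the exponent in \eqref{HighLinearEnergyEstimate} grow polynomially in $R$; absorbing this growth into the choice of $T_R$ (rather than into $R_0$) is exactly what allows the iteration space to be mapped to itself.
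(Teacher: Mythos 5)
Your plan is correct and follows essentially the same route as the paper: with $F\equiv 0$ the energy estimate \eqref{HighLinearEnergyEstimate} reduces to a data-dependent base value times factors whose growth in $\tau$ is polynomial in $R$ (controlled exactly as in the paper's displayed bound $\Lambda\norm{H^{k-1}}{Dx}\leq 4(1+\norm{H^{k-1}}{Dx_0}^2+\tau^2\Norm{H^{k-1},\tau}{Dx})$ together with the Sobolev inequality and $\Lambda_E[x]\le 2\Lambda_E[x_0]$ for $T<K_0/R$), and one absorbs that growth by shrinking $T_R$ with $R$. The only detail to tidy is constant tracking: since the exponential and the factor $2$ in $\Lambda_E[x]\le 2\Lambda_E[x_0]$ also multiply $M_0$, take $R_0=CM_0$ for a suitable universal constant (or shrink $T_R$ so these factors are close to $1$) rather than exactly $2M_0$.
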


Fixing some such $R$, we now show that for a possibly smaller $T>0$, the map $v\mapsto \Phi[v]$ is a contraction with respect to the $C^3_T$ norm.

\begin{lemma}
Let $R,T_R$ be as given by Lemma \ref{Boundedness}.  For a possibly smaller value of $T_R$, we have for each $T\leq T_R$ that
\begin{equation}
\Norm{3,\tau}{\Phi[v^1]-\Phi[v^2]} < \Norm{3,\tau}{v^1-v^2} 
\end{equation}
for all $v^1,v^2\in V^k_{R,T}$.
\end{lemma}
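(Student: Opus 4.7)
The plan is to set $W = \Phi[v^1] - \Phi[v^2]$ and show that $W$ satisfies a linear wave equation whose forcing is small enough, via \eqref{LowLinearEnergyEstimate}, to yield a strict contraction after shrinking $T$. Writing $\Phi^i = \Phi[v^i]$ and $x^i = x(v^i) = x_0 + \int_0^\tau v^i$, subtracting the two instances of \eqref{DefineIterate} yields
\begin{equation*}
\del\tau^2 W - \opA[x^1] W = F, \qquad F := (\opA[x^1] - \opA[x^2])\Phi^2,
\end{equation*}
with $(W, \del\tau W)\big|_{\tau = 0} = (0, 0)$. Since $x^1(0) = x^2(0) = x_0$, we have $F(0) = 0$, and the equation at $\tau = 0$ then gives $\del\tau^2 W\big|_{\tau = 0} = 0$; hence the formal initial data $\norm{3}{W_0}$ vanishes.

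Since $v^1 \in V^k_{R,T}$ with $k \geq 4$, the Sobolev inequality \eqref{Sobolev} provides uniform control of $\norm{W^{1,\infty}}{Dx^1} + \norm{L^\infty}{D\del\tau x^1}$, and hence of $\AAAnorm{3,\tau}{Dx^1}$ and $\Lambda_E[x^1]$, all in terms of a constant $C(R)$ depending on $R$, $x_0$, $u_0$, $u_1$. Applying \eqref{LowLinearEnergyEstimate} to $W$ and using $\norm{3}{W_0} = 0 = \norm{H^1}{F(0)}$ reduces it to
\begin{equation*}
\norm{3}{W(\tau)} \leq C(R) \int_0^\tau \norm{H^1}{\del\tau F}\, ds.
\end{equation*}

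The core estimate is then $\sup_{[0,T]} \norm{H^1}{\del\tau F} \leq C(R)\, \Norm{3,T}{v^1 - v^2}$. Expanding,
\[
\del\tau F = \bigl[\del\tau, \opA[x^1] - \opA[x^2]\bigr]\Phi^2 + (\opA[x^1] - \opA[x^2])\del\tau\Phi^2,
\]
each summand is a second-order differential operator applied to $\Phi^2$ or $\del\tau\Phi^2$ whose coefficients, by the divergence form \eqref{DivergenceForm}, are polynomial in $Dx^i$ (and, for the commutator term, also in $Dv^i$) and carry a factor of $D(x^1 - x^2)$ or $D(v^1 - v^2)$. Since $\Phi^2 \in V^k_{R,T}$ and $k \geq 4$, we have $\norm{H^3}{\Phi^2} + \norm{H^3}{\del\tau \Phi^2} \leq R$; furthermore $\norm{H^l}{x^1 - x^2} \leq \tau\, \Norm{l,\tau}{v^1 - v^2}$ for $l \leq 3$, while the factors $D(v^1 - v^2)$ are directly controlled by $\Norm{3,\tau}{v^1 - v^2}$. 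The product estimate \eqref{BasicSobolevProduct} and Sobolev inequality \eqref{Sobolev} then give the bound.

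Integrating in $\tau$ and taking the supremum produces $\Norm{3,T}{\Phi[v^1] - \Phi[v^2]} \leq C(R)\, T\, \Norm{3,T}{v^1 - v^2}$, and shrinking $T_R$ so that $C(R)\, T_R < 1$ yields the strict contraction. The main obstacle is the term-by-term bookkeeping in the bound for $\norm{H^1}{\del\tau F}$, tracking the structure of the coefficients of $\opA[x]$ and of $[\del\tau, \opA[x]]$; however, since we only need the low-order energy estimate while $\Phi^2, \del\tau\Phi^2 \in H^k$ with $k \geq 4$, no derivative loss occurs and the commutator-type estimates \eqref{AllCommute} are amply sufficient.
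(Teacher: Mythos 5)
Your proposal is correct and follows essentially the same route as the paper: subtract the two linear equations to get a forced equation for $W=\Phi[v^1]-\Phi[v^2]$ with zero data and forcing $F=(\opA[x^1]-\opA[x^2])\Phi^2$, bound $\norm{H^1}{\del\tau F}$ via the product/Sobolev estimates (the paper phrases the $x^1-x^2$ dependence through the mean value theorem, you through $\norm{H^l}{x^1-x^2}\leq \tau\Norm{l,\tau}{v^1-v^2}$, which is the same mechanism), and then apply the low-order energy estimate \eqref{LowLinearEnergyEstimate} and shrink $T$ to obtain the contraction.
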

\begin{proof}
Let $\Phi^i=\Phi[v^i]$ and $x^i = x(v^i)$ for $i=1,2$.  Then
\begin{equation}
\del\tau^2(\Phi^1-\Phi^2) - \opA[x^1](\Phi^1-\Phi^2) = (\opA[x^1]-\opA[x^2])\Phi^2.
\end{equation}
Schematically,
\begin{equation}\label{ExpandRHS}
\begin{aligned}
 \del\tau\left[(\opA[x^1]-\opA[x^2])\Phi \right] 
 &= D\left[ \left(\del\tau A(x^1) -\del\tau A(x^2)\right)D\Phi\right] \\
 &\quad+ D\left[ \left( A(x^1) -A(x^2)\right)D\del\tau \Phi \right] \\
 &\quad+ \left(\del\tau b(x^1) - \del\tau b(x^2)\right)D\Phi \\
 &\quad+ \left( b(x^1) - b(x^2)\right)D\del\tau \Phi
\end{aligned}
\end{equation}
Using the mean value theorem, along with product estimate \eqref{BasicSobolevProduct} and the Sobolev inequality \eqref{Sobolev} the $H^1$ norm of each of the first two lines in \eqref{ExpandRHS} is controlled by $C(R)\AAnorm{2}{D(x^1-x^2)}$.  A direct estimate yields the same bound for the latter terms.  Applying the energy estimate \eqref{LowLinearEnergyEstimate}, one can choose $T$, depending on $R$, such that $v\mapsto \Phi[v]$ is a contraction
\end{proof}

A standard argument (see \cite{ShSh98},\cite{Maj84}) using that the solution map $v\mapsto \Phi[v]$ is bounded in $C^k_T$ and a contraction in $C^3_T$ yields part 1 of Theorem \ref{LocalExistence}.

\subsection{Continuation criterion}
We now establish the second part of Theorem \ref{LocalExistence}.
Consider the solution $(x,u)$ to \eqref{ModMain}-\eqref{ModData} and suppose $[0,T_*)$ is the maximal interval of existence.  Let
\begin{equation}
\BLambda_\tau = \sup_{[0,\tau)}\left(\norm{W^{1,\infty}}{Dx}+\norm{L^\infty}{Du}+\Lambda_E[x]\right) .
\end{equation}
Note that $\Lambda_E$ is finite if and only if $\norm{L^\infty}{Dx}$ and $\norm{L^\infty}{|\gamma(x)|^{-1}}$ are.

Using \eqref{AllCommute} and \eqref{SpatialCommuteTau} applied to \eqref{PrelimEnergyEstimate} we see for any $\tau < T_*$ that
\begin{equation}
\norm{k}{u}\leq (1+\Lambda_\tau)\Lambda_Ee^{\int_0^\tau\Lambda^2}\Big( (1+\Lambda^2)\norm{k}{u_0}
+\int_0^\tau(1+\Lambda_\tau^2)\norm{k}{u}\Big).
\end{equation}
Thus Gr\"onwall's lemma implies that 
\begin{equation}\label{FirstNLEnergyBound}
 \Norm{k,\tau}{u}\leq (1+\Lambda_\tau^2)^2\Lambda_E e^{\tau(1+\Lambda_\tau^2)^2\Lambda_E}\norm{k}{u_0}.
\end{equation}
If $\BLambda_{T_*} <\infty$ and $T_*<\infty$, then one may extend $x,u,\del\tau u$ to $[0,T_*]$ such that their restriction to $\tau=T_*$ satisfies the hypotheses of the local existence theorem.  As this contradicts the maximality of $T_*$, we obtain the second part of Theorem \ref{LocalExistence}.

\section{Improved energy estimate}\label{S:Improvement}
Having established the existence of a solution $x\in C^1([0,T];H^k)$ to the main equation \eqref{Main}, we are able to establish an improved energy estimate by making use of the constraint \eqref{Constraint}.  In particular, we obtain the following.
  
\begin{theorem}\label{Improvement} Let $k \geq 4$. 
For a solution $x \in C^1([0,T],H^k)$ \eqref{Main} with initial data $(x_0,
u_0)$, the maximal time of existence $T_*$ depends continuously on 
$\norm{H^k}{x_0}+\norm{H^{k-1}}{u_0}$. 
\end{theorem}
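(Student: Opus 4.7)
The plan is to combine the continuation criterion of Theorem \ref{LocalExistence}(\ref{point:2}) with an improved a priori estimate that controls $\norm{H^k}{x(\tau)}+\norm{H^{k-1}}{u(\tau)}$ (where $u=\del\tau x$) on a time interval depending continuously on the value of this quantity at $\tau=0$. For $k\geq 4$, the Sobolev embedding on $\Sigma$ gives $\norm{W^{1,\infty}}{Dx}+\norm{L^\infty}{Du}\leq C(\norm{H^k}{x}+\norm{H^{k-1}}{u})$, while $\norm{L^\infty}{\gamma^{-1}}$ persists from $\tau=0$ on a short time interval using $\del\tau\gamma_{ab}=2\del{a}x^m\del{b}u_m$. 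Hence such an a priori estimate translates, via Theorem \ref{LocalExistence}(\ref{point:2}), into the stated continuous dependence of $T_*$.

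The heart of the argument is to close the top-order energy estimate using only $\norm{H^{k-1}}{u}$, rather than $\norm{H^{k}}{u}$ as in Theorem \ref{LocalExistence}. Applying $D^{k-1}$ to \eqref{Main} yields schematically
\begin{equation*}
 \del\tau^2 D^{k-1}x^m = (\opA[x]D^{k-1}x)^m - \{\{D^{k-1}x^n, x_n\}, x^m\} + (\text{lower order}),
\end{equation*}
and the middle ``bad'' term (the non-elliptic summand of the linearization \eqref{Variation}, which would be killed by the constraint if $D^{k-1}x$ were replaced by $u$) obstructs a direct $\opA$-energy estimate. The key observation is that when this term is paired with $\del\tau D^{k-1}x_m$ in the energy integral, Stokes' theorem recasts it as $\int\{\del\tau D^{k-1}x_m,x^m\}\{D^{k-1}x^n,x_n\}\sqw$, and the constraint \eqref{Constraint} together with its $D^{k-1}$-derivative identifies $\{\del\tau D^{k-1}x_m,x^m\}$ with $\tfrac12\del\tau\{D^{k-1}x_m,x^m\}$ modulo commutator terms of lower order. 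Thus the modified energy
\begin{equation*}
 \widetilde{\mathcal{E}}_k = \tfrac12\int\left(|\del\tau D^{k-1}x|^2 + \gamma\gamma^{ab}\del{a}D^{k-1}x^m\del{b}D^{k-1}x_m\right)\sqw - \tfrac14\int|\{D^{k-1}x^n,x_n\}|^2\sqw
\end{equation*}
absorbs the problematic contribution exactly through its second summand, so that $\tfrac{d}{d\tau}\widetilde{\mathcal{E}}_k$ contains only lower-order remainders.

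The main obstacle is the careful bookkeeping of the commutators generated by $[D^{k-1},\{\cdot,\cdot\}]$ and $[\del\tau,\{\cdot,x\}]$, each of which must be controlled by $\norm{H^k}{x}+\norm{H^{k-1}}{u}$ using the product estimate \eqref{BasicSobolevProduct}, the Gagliardo-Nirenberg \eqref{GN}, the commutator \eqref{CommutatorEstimate}, and the Sobolev \eqref{Sobolev} inequalities. Coercivity of $\widetilde{\mathcal{E}}_k$ is secured by non-degeneracy of $\gamma(x)$ together with the pointwise bound $|\{D^{k-1}x^n,x_n\}|\leq C\norm{L^\infty}{Dx}|D^kx|$, which keeps the subtracted piece small relative to the elliptic term. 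Augmenting $\widetilde{\mathcal{E}}_k$ with routine lower-order energies at levels $H^l\times H^{l-1}$ for $l<k$ (which close without invoking the constraint) and applying Grönwall's lemma then produces the desired a priori bound on $\norm{H^k}{x}+\norm{H^{k-1}}{u}$, from which the theorem follows.
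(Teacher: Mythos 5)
Your overall architecture coincides with the paper's: continuation criterion plus an improved, constraint-assisted energy estimate one derivative below the level of Theorem \ref{LocalExistence}, and your modified energy $\widetilde{\mathcal{E}}_k$ is, up to normalization, exactly the paper's energy $\scE[D^{k-1}x]$ built from the full degenerate symbol $L^{ab}_{mn}$ of \eqref{L:Define}, whose deficit relative to the coercive quantity is precisely $\scZ[D^{k-1}x]=\tfrac12\int_\Sigma\{D^{k-1}x^m,x_m\}^2\sqw$. The genuine gap is your coercivity step. The assertion that the pointwise bound $|\{D^{k-1}x^n,x_n\}|\leq C\norm{L^\infty}{Dx}\,|D^kx|$ ``keeps the subtracted piece small relative to the elliptic term'' is not a valid argument: both pieces scale identically in $Dx$ and $D^kx$, and this is exactly the degeneracy of \S\ref{DegenerateSystem}. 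Concretely, work locally with $w=1$, $x^1=\sigma^1$, $x^2=\sigma^2$, the remaining $x^m$ constant (so $\gamma_{ab}=\delta_{ab}$ is perfectly non-degenerate), and suppose at a point $\del{2}D^{k-1}x^1=1$, $\del{1}D^{k-1}x^2=-1$ with all other entries of $DD^{k-1}x$ vanishing; then $\{D^{k-1}x^n,x_n\}=-2$, so the subtracted density $\tfrac14\{D^{k-1}x^n,x_n\}^2=1$ exactly equals the elliptic density $\tfrac12\gamma\gamma^{ab}\del{a}D^{k-1}x^m\del{b}D^{k-1}x_m=1$ (and with the coefficient $\tfrac12$ that the absorption actually produces in the paper's normalization, the form is strictly negative there). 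The constraint \eqref{Constraint} cannot exclude such configurations, since it ties $Du$ to $Dx$ but places no pointwise restriction on the purely spatial derivatives $D^kx$. Hence $\widetilde{\mathcal{E}}_k$ does not dominate $c\,\norm{L^2}{\del{}D^{k-1}x}^2$, and Gr\"onwall applied to it alone does not yield the a priori bound on $\norm{H^k}{x}+\norm{H^{k-1}}{u}$ that the continuation criterion requires.

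The repair is precisely what the paper does: abandon pointwise coercivity and control the deficit dynamically. One keeps $\norm{L^2}{\del{}D^lx}^2\leq\Lambda_\scE^2\left(\scE[D^lx]+\scZ[D^lx]\right)$ as in \eqref{FirstNormCompare}, and estimates $\scZ[D^lx]$ separately from the $D^l$-differentiated constraint, $\{D^lu^m,x_m\}=-\sum_{j\geq1}\{D^{l-j}u^m,D^jx_m\}$, bounding the intermediate terms by Gagliardo--Nirenberg interpolation \eqref{GN}; this gives $\del\tau\scZ[D^lx]\leq\scZ[D^lx]+\Lambda^2\norm{L^2}{D^l\del{}x}^2$ and, after Gr\"onwall, the bound \eqref{EstimateZ}, whose time-integral term is then absorbed under the smallness condition \eqref{SmallT} to produce \eqref{ImprovedNormCompare}. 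Your manipulation of the bad term via Stokes and the differentiated constraint is consistent with this, but without the separate $\scZ$-estimate and the small-time absorption (and the closing bootstrap controlling $\Lambda_T$, $\Lambda_\scE$, and \eqref{SmallT} simultaneously, which you only gesture at), the argument does not close; with those ingredients it becomes the paper's proof.
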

\begin{remark} Comparing with  theorem \ref{LocalExistence}, we see that
  regularity requirement on the 
initial data is less. In particular, in  theorem \ref{LocalExistence} it is
required that $x_0, u_0$ are in $H^k$, while in theorem \ref{Improvement},
the regularity condition is of the type usually encountered for hyperbolic
systems. Thus, from this point of view, when taking the area preserving
constraint \eqref{Constraint} into account, the degenerate hyperbolic system
\eqref{Main} behaves very much like a hyperbolic system.  
\end{remark}

 From the extension criterion, we know that $x$ may be continued so long as $\BLambda_\tau$ is finite.  The Sobolev estimate \eqref{Sobolev} implies that $\BLambda_\tau^2$ can be estimated by $\norm{H^{k-1}}{\del{}x_0}^2 = \norm{H^{k-1}}{Dx_0}^2 + \norm{H^{k-1}}{u_0}^2$.  We now establish an estimate for this quantity by deriving an energy estimate for the main equation \eqref{Main} itself.

Retaining the notation above, we have 
\begin{equation}
\Lambda = \Lambda[x] = C\left(\norm{W^{1,\infty}}{Dx} + \norm{L^\infty}{Du}\right) 
\end{equation}
where we let the constant $C$ increase (independent of $x$) as needed and $\Lambda_\tau = \sup_{[0,\tau)}\Lambda$.

Following the discussion in \S\ref{DegenerateSystem} we write \eqref{Main} as
\begin{equation}
\del\tau^2{x}_m - L_{mn}^{ab}\del{a}\del{b}x^n =F_m
\end{equation}
where $F_m = F_m(Dx)$.  In order to estimate derivatives of $x$ we make use of the identity
\begin{equation}\label{DiMain}
\del\tau^2 (D^lx) - L(D^lx) = D^l(F) - \left[D^l, L \right]x
\end{equation}	
where $L = L^{ab}_{mn}\del{a}\del{b}$ and we have dropped the $m,n$ indices for notational convenience.

Define the  energy associated to $\del\tau^2-L$ by 
  	\begin{equation}\label{Define:scE}
	\scE[y]=\scE[y](\tau) = \frac12 \int_{\Sigma}\left( |\del\tau{y}|^2 + L^{ab}_{mn}(Dx)(\del{a}y^m)(\del{b}y^n)\right)\sqw.
	\end{equation}
We are of course interested in the cases $y=D^lx$ for $l=0,\dots, k-1$.
	
From \eqref{L:Symbol} the non-degeneracy of $\gamma(x)$ implies there is a constant $\Lambda_\scE = \Lambda_\scE[x;T]$ such that on $[0,T]$
\begin{equation}
\scE[y] \leq \Lambda_\scE^2 \norm{L^2}{\del{}y}^2
\quad \text{ and }\quad
\norm{L^2}{\del{}y}^2 \leq \Lambda_\scE^2 \frac12 \int_\Sigma\left( |\del\tau{y}|^2 + \frac{\gamma}{w}|Dy|^2_\gamma\right)\sqw.
\end{equation}	
Note that since $\Lambda_\scE$ essentially controls the ellipticity of $\gamma(x)_{ab}$, it is effectively equivalent to $\Lambda_E$ above.

\textit{A priori}, the energy $\scE[y]$ does not necessarily control $\norm{L^2}{\del{}y}$.  Rather
\begin{equation}\label{FirstNormCompare}
\norm{L^2}{\del{}y}^2 \leq \Lambda_\scE^2\left(\scE[y] + \scZ[y]\right)
\end{equation}
where
\begin{equation}
\scZ[y]  = \frac12 \int_{\Sigma} \left\{ y^m, x_m\right\}^2\sqw.
\end{equation}
The following lemma shows that the ``error'' $\scZ$ can be controlled for $y=D^lx$, if $x$ satisfies the constraint \eqref{Constraint}.

\begin{lemma}
Since the solution $x\in C^1([0,T];H^k)$ satisfies the constraint \eqref{Constraint}, then for $0\leq l \leq k-1$ we have
\begin{equation}\label{EstimateZ}
\begin{aligned}
\scZ[D^{l}x] &\leq e^\tau \left( \scZ[D^{l}x](0) + \int_0^\tau \Lambda^2 \norm{L^2}{D^{l}\del{}x}^2\right)\\
	&\leq e^\tau \left( C\norm{H^{l}}{\del{}x_0}^2 + \tau \Lambda^2_\tau \Norm{L^2,\tau}{D^{l}\del{}x}^2\right).
\end{aligned}
\end{equation}
\end{lemma}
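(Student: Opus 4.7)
The plan is to differentiate $\scZ[D^l x]$ in $\tau$, exploit the area-preserving constraint \eqref{Constraint} to eliminate an apparent loss of regularity, and close via Gr\"onwall's inequality. The key device is to encode the constraint as the vanishing of a certain first-order differential operator, so that the offending high-derivative term surfaces only through a commutator controlled by the Moser-type estimate \eqref{CommutatorEstimate}.

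Introduce the $\tau$-dependent, linear, first-order differential operator $\mathcal{M}: f \mapsto \{f^m, x_m\}$, whose coefficients involve $Dx$ and $1/\sqw$ and are bounded in $W^{1,\infty}$ by $\Lambda$. Three elementary consequences of antisymmetry and the constraint will be used repeatedly: $\mathcal{M} x = \{x^m, x_m\} = 0$, $\mathcal{M}(\del\tau x) = 0$ (the constraint itself), and $(\del\tau \mathcal{M})x = \{x^m, \del\tau x_m\} = -\{\del\tau x^m, x_m\} = 0$. Setting $g_l = \mathcal{M}(D^l x) = \{D^l x^m, x_m\}$ and using linearity of $\mathcal{M}$ and of $\del\tau \mathcal{M}$,
\begin{equation*}
g_l = -[D^l, \mathcal{M}]x,
\qquad
\del\tau g_l = -[D^l, \del\tau \mathcal{M}]x - [D^l, \mathcal{M}]\del\tau x.
\end{equation*}

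The first commutator reduces directly: $[D^l, \del\tau \mathcal{M}]x = -(\del\tau \mathcal{M})(D^l x) = \{\del\tau x^m, D^l x_m\}$, with $L^2$-norm bounded by $\norm{L^\infty}{D\del\tau x}\norm{L^2}{D^{l+1}x} \leq \Lambda\norm{L^2}{D^l\del{}x}$. The second is the dangerous term, since $\mathcal{M}(D^l\del\tau x) = \{D^l\del\tau x^m, x_m\}$ naively contains $D^{l+1}\del\tau x$, which is not controlled by $\norm{L^2}{D^l\del{}x}$. Writing $\mathcal{M} = a \cdot D$ schematically (with $a \sim Dx/\sqw$) and using that partial derivatives commute, $[D^l, \mathcal{M}]\del\tau x = [D^l, a] D\del\tau x$, so the commutator estimate \eqref{CommutatorEstimate}, applied at $L^2$ level with $j=l$, yields
\begin{equation*}
\norm{L^2}{[D^l, \mathcal{M}]\del\tau x}
\leq C\bigl(\norm{L^\infty}{D\del\tau x}\norm{H^l}{Dx} + \norm{L^\infty}{D^2 x}\norm{H^{l-1}}{D\del\tau x}\bigr)
\leq C\Lambda\norm{L^2}{D^l\del{}x},
\end{equation*}
after absorbing lower-order norms into constants bounded during existence. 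Thus $\norm{L^2}{\del\tau g_l} \leq C\Lambda \norm{L^2}{D^l\del{}x}$.

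Cauchy-Schwarz and Young's inequality then give
\begin{equation*}
\del\tau \scZ[D^l x] = \int_\Sigma g_l\, \del\tau g_l\, \sqw \leq \norm{L^2}{g_l}\norm{L^2}{\del\tau g_l} \leq \scZ[D^l x] + C\Lambda^2\norm{L^2}{D^l\del{}x}^2,
\end{equation*}
and Gr\"onwall's lemma delivers the first inequality in \eqref{EstimateZ}. The initial bound $\scZ[D^l x](0) \leq C\norm{L^\infty}{Dx_0}^2\norm{L^2}{D^{l+1}x_0}^2 \leq C\norm{H^l}{\del{}x_0}^2$ follows from $|\{D^l x_0^m, x_{0,m}\}| \leq C|Dx_0||D^{l+1}x_0|$ together with Sobolev embedding on $x_0 \in H^k$; pulling $\Lambda^2 \leq \Lambda_\tau^2$ and $\norm{L^2}{D^l\del{}x}^2 \leq \Norm{L^2,\tau}{D^l\del{}x}^2$ out of the time integral yields the second inequality. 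The principal obstacle is exactly the apparent derivative loss in the term $\{D^l\del\tau x^m, x_m\}$; the constraint is precisely what allows this term to be rewritten as the commutator $-[D^l, \mathcal{M}]\del\tau x$, so that the Moser estimate trades the lost derivative on $\del\tau x$ for a derivative of $x$, which \emph{is} controlled via $D^{l+1}x$ inside $D^l\del{}x$.
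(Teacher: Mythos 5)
Your skeleton is the paper's own: differentiate $\scZ[D^{l}x]$ in $\tau$, use the constraint \eqref{Constraint} to trade the top-order term $\{D^{l}\del\tau x^m,x_m\}$ for Leibniz/commutator terms of controllable order, and close with Gr\"onwall; your operator $\mathcal{M}$ is simply a compact way of writing the Leibniz expansion the paper performs, and your treatment of $(\del\tau\mathcal{M})(D^{l}x)$ and of the initial term $\scZ[D^{l}x](0)$ matches the paper.

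The step that does not deliver the stated bound is the passage from the commutator estimate \eqref{CommutatorEstimate} to $\norm{L^2}{[D^{l},a]D\del\tau x}\leq C\Lambda\norm{L^2}{D^{l}\del{}x}$. The norms $\norm{H^{l}}{Dx}$ and $\norm{H^{l-1}}{D\del\tau x}$ produced by \eqref{CommutatorEstimate} contain lower-order pieces (e.g.\ $\norm{L^2}{Dx}$ and $\norm{L^2}{D\del\tau x}$) which are not controlled by the single top-order quantity $\norm{L^2}{D^{l}\del{}x}$; on compact $\Sigma$ they are controlled only by $\Lambda$, so what your chain actually yields is $\norm{L^2}{\del\tau g_l}\leq C\Lambda\left(\Lambda+\norm{L^2}{D^{l}\del{}x}\right)$ and hence $\del\tau\scZ[D^{l}x]\leq \scZ[D^{l}x]+C\Lambda^2\norm{L^2}{D^{l}\del{}x}^2+C\Lambda^4$, which is strictly weaker than \eqref{EstimateZ}. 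Dismissing the lower-order norms as ``constants bounded during existence'' is not legitimate in this context: in Theorem \ref{Improvement} the quantity $\BLambda_T$ is precisely what the lemma is used to bound, so $\Lambda$ cannot be treated as an a priori constant, and the extra $\tau\Lambda_\tau^4$ term would survive into \eqref{ImprovedNormCompare}--\eqref{ControlLambda}, since the smallness condition \eqref{SmallT} only absorbs contributions proportional to $\Norm{L^2,\tau}{D^{l}\del{}x}^2$. The repair is the paper's treatment of the commutator: expand $[D^{l},\mathcal{M}]\del\tau x$ by Leibniz into the bilinear terms $\{D^{l-j}\del\tau x^m, D^{j}x_m\}$ for $1\leq j\leq l$, handle $j=1$ and $j=l$ directly, and estimate the intermediate terms by H\"older together with the Gagliardo--Nirenberg interpolation \eqref{GN} with $L^\infty$ endpoints $\norm{L^\infty}{Du}$ and $\norm{L^\infty}{D^2x}$; the multiplicative form of \eqref{GN} bounds each term by $\Lambda\norm{L^2}{D^{l}\del{}x}$ with no additive lower-order remainder, which is exactly what \eqref{EstimateZ} requires.
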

\begin{proof}
Applying $D^{l}$ to the constraint yields
\begin{equation}
\left\{D^{l} u^m, x_m\right\} = -\sum_{j=1}^{l} \left\{D^{l-j} u^m, D^{j}x_m\right\}
\end{equation}
from which we see that
\begin{equation}
\del\tau\scZ[D^{l}x] \leq \scZ[D^{l}x] + \sum_{j=1}^{l} \int_\Sigma \left\{D^{l-j}u^m, D^{j}x_m\right\}^2\sqw.
\end{equation}
When $j=1$ or $l$ we have
\begin{equation}
\int_\Sigma \left\{D^{l-j}u^m, D^{j}x_m\right\}^2\sqw \leq \norm{L^\infty}{D\del{}x}^2 \norm{L^2}{D^{l}\del{}x}^2.
\end{equation}
When $2\leq j \leq l-1$ the H\"older estimate implies
\begin{equation}
\int_\Sigma \left\{D^{l-j}u^m, D^{j}x_m\right\}^2\sqw \leq \norm{L^{2p}}{D^{l-j}Du} \norm{L^{2q}}{D^{j-1}D^2x} 
\end{equation}
with $\frac{1}{p} = 1-\frac{1}{q} = \frac{l-j}{l-1}$.  From the interpolation estimate 
\eqref{GN}
we have
\begin{align}
 \norm{L^{2p}}{D^{l-j}Du} &\leq C \norm{L^2}{D^{l-1}Du}^{\frac{1}{p}} \norm{L^\infty}{Du}^{1-{\frac{1}{p}}}
	 \\
 \norm{L^{2q}}{D^{j-1}D^2x} &\leq C \norm{L^2}{D^{l-1}D^2{x}}^{1-{\frac{1}{p}}} \norm{L^\infty}{D^2{x}}^{\frac{1}{p}}.
\end{align}
Thus 
\begin{equation}
 \del\tau \scZ[D^{l}x] \leq \scZ[D^{l}x]  +\Lambda^2 \norm{L^2}{D^{l}\del{}x}^2.
\end{equation}  
Integrating  and applying Gr\"onwall's lemma yields \eqref{EstimateZ}.
\end{proof}

 The estimate \eqref{EstimateZ} for $Z$, together with \eqref{FirstNormCompare}, implies that when
\begin{equation}\label{SmallT}
T e^T  < \Lambda_\scE^{-2}\Lambda_T^{-2} 
\end{equation}
then
\begin{align}\label{ImprovedNormCompare}
\Norm{L^2,\tau}{\del{}D^{l}x}^2 &\leq C\Lambda_\scE\left(\norm{H^{l}}{\del{}x_0}^2 +\sup_{[0,\tau]}\scE[D^{l}x]^{}\right).
\end{align}
This facilitates the construction of the following energy estimate.

\begin{lemma}
When \eqref{SmallT} holds
the first derivatives $\del{}x$ of solution $x\in C^1([0,T];H^k)$ can be controlled in $L^\infty([0,T];H^l)$ for $0\leq l \leq k-1$ by
\begin{equation}\label{ImprovedEnergyBound}
 \Norm{H^l,\tau}{\del{}x} \leq C\Lambda_\scE e^{\int_0^\tau \Lambda^2}\norm{H^l}{\del{}x_0}.
\end{equation}
\end{lemma}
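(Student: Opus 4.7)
The plan is to derive an energy estimate for $\scE[D^l x]$ at each order $0 \leq l \leq k-1$, using the identity \eqref{DiMain}. Once this is established, the constraint-based inequality \eqref{ImprovedNormCompare} converts the energy bound into a bound on $\Norm{H^l,\tau}{\del{}x}$, and Gr\"onwall's lemma closes the estimate.

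First, I would differentiate $\scE[D^l x]$ in $\tau$, substitute \eqref{DiMain} for $\del\tau^2 D^l x$, and integrate by parts to obtain
\begin{equation}
\begin{aligned}
\del\tau \scE[D^l x] &= \frac12 \int_\Sigma (\del\tau L^{ab}_{mn})(\del{a} D^l x^m)(\del{b} D^l x^n)\sqw \\
&\quad - \int_\Sigma \del\tau D^l x^m (\del{a}L^{ab}_{mn})(\del{b} D^l x^n)\sqw \\
&\quad + \int_\Sigma \del\tau D^l x \cdot \left( D^l F - [D^l, L] x \right)\sqw.
\end{aligned}
\end{equation}
The first two terms are bounded pointwise by $C\Lambda^2 |\del{} D^l x|^2$ since $L^{ab}_{mn}$ is a low-degree polynomial in $Dx$ by \eqref{L:Define}. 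For the third term, observe that $F = F(Dx)$ is polynomial in $Dx$ alone (it comprises precisely the terms lost when extracting the principal part of $\scL$), so the product estimate \eqref{BasicSobolevProduct} combined with \eqref{Sobolev} yields $\norm{L^2}{D^l F} \leq C\Lambda \norm{H^l}{Dx}$. For the commutator $[D^l, L]x$, the coefficients $L^{ab}_{mn}$ are quadratic in $Dx$, so \eqref{CommutatorEstimate} together with \eqref{BasicSobolevProduct} provides an analogous bound $\norm{L^2}{[D^l, L]x} \leq C\Lambda \norm{H^l}{Dx}$. Collecting these yields
\begin{equation}
\del\tau \scE[D^l x] \leq C\Lambda^2 \norm{L^2}{\del{} D^l x}^2.
\end{equation}

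The crucial step, which is also the main obstacle, comes next: \emph{a priori}, $\scE[D^l x]$ does not dominate $\norm{L^2}{\del{} D^l x}^2$ because of the degeneracies of $L$ exposed in \S\ref{DegenerateSystem}. The area-preserving constraint \eqref{Constraint}, channeled through the already-established \eqref{ImprovedNormCompare}, is precisely what enables one to close the circle: it provides
\begin{equation}
\Norm{L^2,\tau}{\del{} D^l x}^2 \leq C\Lambda_\scE\left(\norm{H^l}{\del{}x_0}^2 + \sup_{[0,\tau]}\scE[D^l x]\right).
\end{equation}
Substituting this into the differential inequality for $\scE[D^l x]$ and applying Gr\"onwall's lemma yields $\sup_{[0,\tau]}\scE[D^l x] \leq C\Lambda_\scE e^{C\int_0^\tau \Lambda^2}\norm{H^l}{\del{}x_0}^2$. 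Invoking \eqref{ImprovedNormCompare} one final time, and summing over $0 \leq l' \leq l$, produces the claimed bound \eqref{ImprovedEnergyBound}.

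In summary, the central difficulty is conceptual rather than computational: the degenerate hyperbolic nature of \eqref{Main} would prevent any naive energy estimate from closing, and it is only by exploiting the area-preserving constraint through \eqref{ImprovedNormCompare} that one recovers genuine $H^l$-control of $\del{}x$ with the reduced regularity requirement noted in the remark following Theorem \ref{Improvement}. Once this constraint-based inequality is in hand, the remainder reduces to a careful but standard application of Moser-type commutator, product, and Sobolev estimates.
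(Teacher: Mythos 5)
Your proposal follows essentially the same route as the paper: differentiate $\scE[D^l x]$, use the identity \eqref{DiMain} with commutator and product estimates to bound $D^lF - [D^l,L]x$ by $\Lambda$-factors times $\norm{H^l}{Dx}$, then invoke the constraint-based inequality \eqref{ImprovedNormCompare} to compensate for the degeneracy of $L$ before closing with Gr\"onwall's lemma. The argument is correct, with only minor bookkeeping differences (powers of $\Lambda$ and $\Lambda_\scE$) from the paper's version.
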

\begin{proof}
A straightforward computation using integration by parts yields
\begin{equation}\label{BasicLEnergy}
\del\tau \scE[y] \leq \norm{L^2}{\del{\tau}y}\norm{L^2}{\del\tau^2 y - Ly} + \Lambda^2 \norm{L^2}{Dy}^2.
\end{equation}
In order to apply this to $y=D^{l}x$ we must estimate the right side of \eqref{DiMain}, which we write schematically as
\begin{equation}
 D^lF - \left[D^l, L \right]x = D^lF - \left[D^lD, L^{ab}_{mn}\right] Dx + D^l\left( (DL^{ab}_{mn})(Dx)\right).
\end{equation}
Applying the commutator estimate \eqref{CommutatorEstimate} and product estimate \eqref{BasicSobolevProduct} we have 
\begin{equation}
\norm{L^2}{D^lF - \left[D^l, L \right]x}  \leq 
\Lambda^2\norm{H^{l}}{Dx}.
\end{equation}
Thus integrating \eqref{BasicLEnergy} and using \eqref{ImprovedNormCompare} we have
\begin{equation}
\Norm{H^l,\tau}{\del{}x}^2 \leq C\Lambda_\scE^2 \Big(\norm{H^l}{\del{}x_0} 
+\int_0^\tau  \Lambda^2\norm{H^{l}}{Dx}^2 
\Big)
\end{equation}
Applying Gr\"onwall's lemma we obtain \eqref{ImprovedEnergyBound}.
\end{proof}

We now show that there exists $T>0$, depending only on $\norm{H^{3}}{\del{}x_0}$, such that the solution $x$ to \eqref{Main} is defined on $[0,T]$.  The key is to establish estimates for $\Lambda_T$ and $\Lambda_\scE$, and to ensure that \eqref{SmallT} is satisfied.

In order to estimate $\Lambda_\scE$ note that there exists $K$, depending on $\Lambda_\scE[x_0]$, such that 
$\Lambda_\scE[x] \leq 2\Lambda_\scE[x_0]$
whenever 
\begin{equation}\label{ControlNormCompare}
 T\leq K\Lambda_T^{-1}.
\end{equation}  
For such $T$, the condition $\eqref{SmallT}$ follows from
\begin{equation}\label{ModSmallT}
Te^T\leq C\Lambda_\scE[x_0]^{-2}\Lambda_T^{-2}. 
\end{equation}
Making use of the Sobolev inequality \eqref{Sobolev} the energy estimate \eqref{ImprovedEnergyBound} implies that when \eqref{ControlNormCompare} is satisfied
\begin{equation}\label{ControlLambda}
\Lambda_T \leq C\Lambda_\scE[x_0] e^{T \Lambda^2_T}\norm{H^{k-1}}{\del{}x_0}.
\end{equation}
Since \eqref{ControlNormCompare}-\eqref{ModSmallT}-\eqref{ControlLambda} all hold with strict inequality for $T=0$, we may choose $T>0$, depending on $\norm{H^{k-1}}{\del{}x_0}$, such that they continue to hold on $[0,T]$.

 
\subsection*{Acknowledgements}
  LA and PA thank the Mittag-Leffler Institute for hospitality and support during the 2008 program \emph{Geometry, Analysis, and Gravitation}.  LA is supported in part by the NSF,
contract no.  DMS 0707306.  The work of AR was supported by a grant from the
Albert Einstein Institute. Part of this work was done while PA held a
post-doc position at the Albert Einstein Institute, and during a long-term
research visit at the same institution.


\begin{thebibliography}{10}

\bibitem{Abrahams:1996hh}
Andrew Abrahams, Arlen Anderson, Yvonne Choquet-Bruhat, and Jr. York, James~W.,
  \emph{{Geometrical hyperbolic systems for general relativity and gauge
  theories}}, Class. Quant. Grav. \textbf{14} (1997), A9--A22.

\bibitem{AcGu04}
Bobby~S. Acharya and Sergei Gukov, \emph{M theory and singularities of
  exceptional holonomy manifolds}, Phys. Rep. \textbf{392} (2004), no.~3,
  121--189. \MR{MR2039166 (2004k:53072)}

\bibitem{AML}
Lars Andersson and Vincent Moncrief, \emph{Elliptic-hyperbolic systems and the
  {E}instein equations}, Ann. Henri Poincar\'e \textbf{4} (2003), no.~1, 1--34.

\bibitem{ArRes76}
Carlos Aragone and Alvaro Restuccia, \emph{Signal-front gravidynamics of vector
  fields in the ray gauge}, Phys. Rev. D (3) \textbf{13} (1976), no.~2,
  207--217. \MR{MR0469064 (57 \#8865)}

\bibitem{ArnEtc09}
Joakim Arnlind, Martin Bordemann, Laurent Hofer, Jens Hoppe, and Hidehiko
  Shimada, \emph{Noncommutative {R}iemann surfaces by embeddings in {$\mathbb
  R\sp 3$}}, Comm. Math. Phys. \textbf{288} (2009), no.~2, 403--429.
  \MR{MR2500989}

\bibitem{AtWi03}
Michael Atiyah and Edward Witten, \emph{{$M$}-theory dynamics on a manifold of
  {$G\sb 2$} holonomy}, Adv. Theor. Math. Phys. \textbf{6} (2002), no.~1,
  1--106. \MR{MR1992874 (2004f:53046)}

\bibitem{AuCh78}
Antonio Aurilia and Demetrios Christodoulou, \emph{Theory of strings and
  membranes in an external field. {I}. {G}eneral formulation}, J. Math. Phys.
  \textbf{20} (1979), no.~7, 1446--1452. \MR{MR538719 (84a:35184)}

\bibitem{BatFr88}
Igor~A. Batalin and Efim~S. Fradkin, \emph{Operatorial quantization of
  dynamical systems subject to constraints. {A} further study of the
  construction}, Ann. Inst. H. Poincar\'e Phys. Th\'eor. \textbf{49} (1988),
  no.~2, 145--214. \MR{MR984149 (90m:81037)}

\bibitem{BatVil83}
Igor~A. Batalin and G.~A. Vilkovisky, \emph{Quantization of gauge theories with
  linearly dependent generators}, Phys. Rev. D (3) \textbf{28} (1983), no.~10,
  2567--2582. \MR{MR726170 (85i:81068a)}

\bibitem{BaFFL78}
Fran\c{c}ois Bayen, Mosh\'e Flato, Christian Fr{\o}nsdal, Andr\'e Lichnerowicz,
  and Daniel Sternheimer, \emph{Deformation theory and quantization. {I}.
  {D}eformations of symplectic structures}, Ann. Physics \textbf{111} (1978),
  no.~1, 61--110. \MR{MR0496157 (58 \#14737a)}

\bibitem{BeSeTo87}
Eric~A. Bergshoeff, Ergin Sezgin, and Paul~K. Townsend, \emph{Supermembranes
  and eleven-dimensional supergravity}, Phys. Lett. B \textbf{189} (1987),
  no.~1-2, 75--78. \MR{MR890903 (88f:83080)}

\bibitem{BeSeTo88}
\bysame, \emph{Properties of the eleven-dimensional supermembrane theory}, Ann.
  Physics \textbf{185} (1988), no.~2, 330--368. \MR{MR965582 (90a:83075)}

\bibitem{BorHopSch91}
Martin Bordemann, Jens Hoppe, Peter Schaller, and Martin Schlichenmaier,
  \emph{{${\rm gl}(\infty)$} and geometric quantization}, Comm. Math. Phys.
  \textbf{138} (1991), no.~2, 209--244. \MR{MR1108043 (93c:58077)}

\bibitem{BoGaRe03}
Lyonell~S Boulton, Mar\'{\i}a~Pilar Garc{\'{\i}}a~del Moral, and Alvaro
  Restuccia, \emph{Discreteness of the spectrum of the compactified {$D=11$}
  supermembrane with nontrivial winding}, Nuclear Phys. B \textbf{671} (2003),
  no.~1-3, 343--358. \MR{MR2012881 (2004i:83119)}

\bibitem{ClHal85}
Mark Claudson and Martin~B. Halpern, \emph{Supersymmetric ground state wave
  functions}, Nuclear Phys. B \textbf{250} (1985), no.~4, 689--715.
  \MR{MR780110 (86m:81124)}

\bibitem{DeHoNi88}
Bernard de~Wit, Jens Hoppe, and Hermann Nicolai, \emph{On the quantum mechanics
  of supermembranes}, Nuclear Phys. B \textbf{305} (1988), no.~4, FS23,
  545--581. \MR{MR984284 (90c:81225)}

\bibitem{DeLuNi89}
Bernard de~Wit, Martin L{\"u}scher, and Hermann Nicolai, \emph{The
  supermembrane is unstable}, Nuclear Phys. B \textbf{320} (1989), no.~1,
  135--159. \MR{MR1003285 (90e:81244)}

\bibitem{DeMaNi90}
Bernard de~Wit, U.~Marquard, and Hermann Nicolai, \emph{Area-preserving
  diffeomorphisms and supermembrane {L}orentz invariance}, Comm. Math. Phys.
  \textbf{128} (1990), no.~1, 39--62. \MR{MR1042442 (91c:81089)}

\bibitem{Di64}
Paul A.~M. Dirac, \emph{Lectures on quantum mechanics}, Belfer Graduate School
  of Science Monographs Series, vol.~2, Belfer Graduate School of Science, New
  York, 1967, Second printing of the 1964 original. \MR{MR2220894}

\bibitem{FarKra92}
Hershel~M Farkas and Irwin Kra, \emph{Riemann surfaces}, second ed., Graduate
  Texts in Mathematics, vol.~71, Springer-Verlag, New York, 1992. \MR{MR1139765
  (93a:30047)}

\bibitem{Fe94}
Boris~V. Fedosov, \emph{A simple geometrical construction of deformation
  quantization}, J. Differential Geom. \textbf{40} (1994), no.~2, 213--238.
  \MR{MR1293654 (95h:58062)}

\bibitem{FiMar72}
Arthur~E. Fischer and Jerrold~E. Marsden, \emph{The {E}instein evolution
  equations as a first-order quasi-linear symmetric hyperbolic system. {I}},
  Comm. Math. Phys. \textbf{28} (1972), 1--38.

\bibitem{CB52}
Yvonne Four{\`e}s-Bruhat, \emph{Th\'eor\`eme d'existence pour certains
  syst\`emes d'\'equations aux d\'eriv\'ees partielles non lin\'eaires}, Acta
  Math. \textbf{88} (1952), 141--225. \MR{MR0053338 (14,756g)}

\bibitem{FrVil75}
Efim~S. Fradkin and G.~A. Vilkovisky, \emph{Quantization of relativistic
  systems with constraints}, Phys. Lett. B \textbf{55} (1975), no.~2, 224--226.
  \MR{MR0411451 (53 \#15185)}

\bibitem{GaNPR07}
Mar\'{\i}a~Pilar Garcia~del Moral, Luis Javier~Mendez Navarro, Am{\i}lcar~J.
  P{\'e}rez, and Alvaro Restuccia, \emph{Intrinsic moment of inertia of
  membranes as bounds for the mass gap of {Y}ang-{M}ills theories}, Nuclear
  Phys. B \textbf{765} (2007), no.~3, 287--298. \MR{MR2310519 (2008a:81180)}

\bibitem{Ho94}
Jens Hoppe, \emph{Canonical 3+1 description of relativistic membranes},
  arXiv:hep-th/9407103v2.

\bibitem{Ho82}
\bysame, \emph{Quantum theory of a massless relativistic surface}, Ph.D.
  thesis, MIT, 1982, www.aei.mpg.de/$\sim$hoppe.

\bibitem{HoNi93}
Jens Hoppe and Hermann Nicolai, \emph{Relativistic minimal surfaces}, Phys.
  Lett. B \textbf{196} (1987), no.~4, 451--455. \MR{MR911821 (88m:53015)}

\bibitem{HopYau98}
Jens Hoppe and Shing-Tung Yau, \emph{Some properties of matrix harmonics on
  {$S\sp 2$}}, Comm. Math. Phys. \textbf{195} (1998), no.~1, 67--77.
  \MR{MR1637401 (99f:58183)}

\bibitem{IsNe80}
James Isenberg and James Nester, \emph{Canonical gravity}, General relativity
  and gravitation, Vol. 1, Plenum, New York, 1980, pp.~23--97. \MR{MR583715
  (82e:83001)}

\bibitem{Ka75}
Michio Kaku, \emph{Ghost-free formulation of quantum gravity in the light-cone
  gauge}, Nuclear Physics B \textbf{91} (1975), no.~1, 99 -- 108.

\bibitem{Ko93}
Herbert Koch, \emph{Mixed problems for fully nonlinear hyperbolic equations},
  Math. Z. \textbf{214} (1993), no.~1, 9--42. \MR{MR1234595 (94j:35106)}

\bibitem{Kon03}
Maxim Kontsevich, \emph{Deformation quantization of {P}oisson manifolds}, Lett.
  Math. Phys. \textbf{66} (2003), no.~3, 157--216. \MR{MR2062626 (2005i:53122)}

\bibitem{Lieb87}
George Leibbrandt, \emph{Introduction to noncovariant gauges}, Rev. Mod. Phys.
  \textbf{59} (1987), no.~4, 1067--1119.

\bibitem{Le91}
Peter Lesky, Jr., \emph{Local existence for solutions of fully nonlinear wave
  equations}, Math. Methods Appl. Sci. \textbf{14} (1991), no.~7, 483--508.
  \MR{MR1125018 (92h:35143)}

\bibitem{Lu83}
Martin L{\"u}scher, \emph{Some analytic results concerning the mass spectrum of
  {Y}ang-{M}ills gauge theories on a torus}, Nuclear Phys. B \textbf{219}
  (1983), no.~1, 233--261. \MR{MR706403 (85b:81127)}

\bibitem{Maj84}
Andrew Majda, \emph{Compressible fluid flow and systems of conservation laws in
  several space variables}, Applied Mathematical Sciences, vol.~53,
  Springer-Verlag, New York, 1984. \MR{MR748308 (85e:35077)}

\bibitem{MaOvRe01}
Isbelia Mart{\'{\i}}n, J.~Ovalle, and Alvaro Restuccia, \emph{Compactified
  {$D=11$} supermembranes and symplectic noncommutative gauge theories}, Phys.
  Rev. D (3) \textbf{64} (2001), no.~4, 046001, 6. \MR{MR1854397 (2002g:81191)}

\bibitem{MaSh05}
Vladimir Maz{$'$}ya and Mikhail Shubin, \emph{Discreteness of spectrum and
  positivity criteria for {S}chr\"odinger operators}, Ann. of Math. (2)
  \textbf{162} (2005), no.~2, 919--942. \MR{MR2183285 (2006i:35046)}

\bibitem{Mi08}
Olaf Milbredt, \emph{The {C}auchy problem for membranes}, arXiv:0807.3465v1,
  adopted from the author's dissertation arXiv:0807.2539v1.

\bibitem{Mo53}
Albert~M. Mol{\v{c}}anov, \emph{On conditions for discreteness of the spectrum
  of self-adjoint differential equations of the second order}, Trudy Moskov.
  Mat. Ob\v s\v c. \textbf{2} (1953), 169--199. \MR{MR0057422 (15,224g)}

\bibitem{Mo06}
Vincent Moncrief, \emph{Can one {ADM} quantize relativistic bosonic strings and
  membranes?}, Gen. Relativity Gravitation \textbf{38} (2006), no.~4, 561--575.
  \MR{MR2225679 (2007m:83093)}

\bibitem{SchSch75}
John Scherk and John~H. Schwarz, \emph{Gravitation in the light cone gauge},
  General Relativity and Gravitation \textbf{6} (1975), no.~6, 537--550.
  \MR{MR0452441 (56 \#10721)}

\bibitem{Sc96}
Christof Schmidhuber, \emph{D-brane actions}, Nuclear Phys. B \textbf{467}
  (1996), no.~1-2, 146--158. \MR{MR1394149 (98d:81095)}

\bibitem{Sen76}
Pavao Senjanovic, \emph{Path integral quantization of field theories with
  second-class constraints}, Annals of Physics \textbf{100} (1976), no.~1-2,
  227--261.

\bibitem{ShSh98}
Jalal Shatah and Michael Struwe, \emph{Geometric wave equations}, Courant
  Lecture Notes in Mathematics, vol.~2, New York University Courant Institute
  of Mathematical Sciences, New York, 1998. \MR{MR1674843 (2000i:35135)}

\bibitem{Si83}
Barry Simon, \emph{Some quantum operators with discrete spectrum but
  classically continuous spectrum}, Ann. Physics \textbf{146} (1983), no.~1,
  209--220. \MR{MR701264 (85d:35089)}

\bibitem{Tay97}
Michael~E. Taylor, \emph{Partial differential equations. {III}}, Applied
  Mathematical Sciences, vol. 117, Springer-Verlag, New York, 1997, Nonlinear
  equations, Corrected reprint of the 1996 original. \MR{MR1477408 (98k:35001)}

\bibitem{To96}
Paul~K. Townsend, \emph{{$D$}-branes from {$M$}-branes}, Phys. Lett. B
  \textbf{373} (1996), no.~1-3, 68--75. \MR{MR1389715 (97c:83071)}

\end{thebibliography}

\providecommand{\bysame}{\leavevmode\hbox to3em{\hrulefill}\thinspace}
\providecommand{\MR}{\relax\ifhmode\unskip\space\fi MR }
\providecommand{\MRhref}[2]{%
  \href{http://www.ams.org/mathscinet-getitem?mr=#1}{#2}
}
\providecommand{\href}[2]{#2}

\end{document}